\newtheorem{theorem}{Theorem}
\newtheorem{lemma}{Lemma}
\newtheorem{remark}{Remark}
\DeclareMathOperator*{\argmin}{arg\,min}
\begin{document}

\title{\spaceskip=4.2pt An Efficient Reservation Protocol for Medium Access: When Tree Splitting Meets Reinforcement Learning}

\author{Yutao Chen and Wei Chen
    \thanks{Yutao Chen and Wei Chen are with the Department of Electronic Engineering, Tsinghua University, Beijing 100084, China, and also with the State Key Laboratory of Space Network and Communications, as well as, the Beijing National Research Center for Information Science and Technology (email: cheny1995@tsinghua.edu.cn, wchen@tsinghua.edu.cn).}
}

\maketitle

\begin{abstract}
    As an enhanced version of massive machine-type communication in 5G, massive communication has emerged as one of the six usage scenarios anticipated for 6G, owing to its potential in industrial internet-of-things and smart metering. Driven by the need for random multiple-access (RMA) in massive communication, as well as, next-generation Wi-Fi, medium access control has attracted considerable recent attention. Holding the promise of attaining bandwidth-efficient collision resolution, multiaccess reservation no doubt plays a central role in RMA, e.g., the distributed coordination function (DCF) in IEEE 802.11. In this paper, we are interested in maximizing the bandwidth efficiency of reservation protocols for RMA under quality-of-service constraints. Particularly, we present a tree splitting based reservation scheme, in which the attempting probability is dynamically optimized by partially observable Markov decision process or reinforcement learning (RL). The RL-empowered tree-splitting algorithm guarantees that all these terminals with backlogged packets at the beginning of a contention cycle can be scheduled, thereby providing a first-in-first-out service. More importantly, it substantially reduces the reservation bandwidth determined by the communication complexity of DCF, through judiciously conceived coding and interaction for exchanging information required by distributed ordering. Simulations demonstrate that the proposed algorithm outperforms the CSMA/CA based DCF in IEEE 802.11.
\end{abstract}
\begin{IEEEkeywords}
    Massive communication, mMTC, QoS, next-generation Wi-Fi, random access, medium access control, multiaccess reservation, distributed coordination function, collision resolution, tree splitting, reinforcement learning, partially observable Markov decision process, communication complexity, FIFO.
\end{IEEEkeywords}

\vspace{-2em}
\section{Introduction}
With the release of the IMT-2030 (6G) framework, researchers have intensified efforts to advance 6G technologies~\cite{letaief2019roadmap}. The IMT-2030 framework outlines six usage scenarios for 6G, including the massive communication~\cite{guo2021enabling}, an evolution of 5G's massive machine-type communication (mMTC)~\cite{bockelmann2016massive}. This scenario envisions ubiquitous connectivity for a vast number of devices and sensors, enabling applications such as the industrial Internet of things (IIoT), smart cities, healthcare, and energy systems. To meet the stringent quality-of-service (QoS) requirements of massive communication, efficient multiplexing and multiaccess techniques have received renewed attention~\cite{mao2022rate,liu2022evolution,hui2022delay,hui2023fresh}. One of the most important approaches is the efficient allocation of communication resources, with a major research direction focusing on fixed allocation techniques~\cite{sklar2021digital}, such as frequency-division multiple access (FDMA), time-division multiple access (TDMA), and code-division multiple access (CDMA). These methods allocate resources statically, prior to service requests. While effective for predictable traffic, this fixed allocation leads to resource underutilization when service demands are bursty or intermittent. In contrast, dynamic allocation adapts to fluctuating service demands. For example, under bursty service requests, fixed allocation often leads to many unused resources, whereas dynamic allocation efficiently distributes resources based on real-time demand. Therefore, access algorithms that enable dynamic allocation are essential for optimizing resource management.

Access algorithms primarily operate at the medium access control (MAC) layer. Early approaches include ALOHA~\cite{abramson1970aloha}, where terminals retransmit after a random delay if a collision occurs. A variant called slotted ALOHA~\cite{roberts1975aloha} was later proposed, where time is globally synchronized and divided into slots, and transmission occurs only at the beginning of a time slot. Another important class of access algorithms is the tree algorithm~\cite{capetanakis2003tree, tsybakov1980random}, where terminals are assigned hierarchical levels based on past transmission outcomes, with subsequent decisions made according to the assigned level. Systematic overviews of ALOHA-like and tree-based algorithms can be found in~\cite{tsybakov1985survey,bertsekas2021data}. The performance of the above algorithms and the theoretical limits of such multiple access algorithms have been widely studied~\cite{pippenger1981bounds, molle1982capacity, cruz1982new, fayolle1985analysis,liew2009bounded, dai2012stability}. However, the direct data transmission mechanism can lead to excessive resource waste during collisions, especially when data packets are large. Therefore, researchers have explored alternatives such as carrier sense multiple access (CSMA)~\cite{kleinrock1975packet, bianchi2000performance}, where terminals can detect whether the channel is busy for a short period before transmitting. Leveraging this information, terminals wait when the channel is busy to avoid collisions and initiate transmission as soon as the channel becomes idle. However, collisions can still occur when multiple terminals detect the idle channel simultaneously. To further reduce collisions during data transmission, reservation-based protocols~\cite{sklar2021digital} have been developed. These protocols use short reservation packets to reserve the communication channel before data transmission. This process, known as the contention cycle, improves performance by separating contention resolution from data transmission. For time-sensitive applications where data must be transmitted within a strict time frame, polling algorithms~\cite{towsley1984adaptive} provide a structured approach by having a central terminal periodically poll devices to determine transmission needs. Beyond collision avoidance, random multiple-access (RMA) has also been investigated through interdisciplinary perspectives.

One prominent approach is information-theoretic RMA that integrates information theory, which traditionally ignores random packet arrivals, and networking, which traditionally lacks detailed physical-layer modeling~\cite{gallager1985,ephremides1998}. For instance, the authors in~\cite{minero2012} leverage the adaptive coding schemes to improve throughput. The random access problem is also formulated as a coding problem in~\cite{polyanskiy2017}. Under this formulation, \cite{paolini2015} introduces coded slotted ALOHA (CAS), which combines packet erasure correcting codes and successive interference cancellation (SIC). A systematic review of coding-theoretic RMA approaches can be found in~\cite{liva2024}. Another key direction applies game theory to model the RMA problem as a non-cooperative game, where terminals make independent access decisions based on their strategies and payoffs. This approach balances fairness and efficiency in channel access~\cite{lee2007Reverse, sagduyu2009game, chen2010}. The RMA problem has also been examined as a network utility maximization (NUM) problem~\cite{nandagopal2000, chen2005, lee2007Utility}, emphasizing global fairness. While these approaches have advanced RMA optimization, communication complexity~\cite{yao1979some} remains high, particularly under heavy traffic conditions. Hence, re-examining RMA with emerging technologies, such as artificial intelligence (AI), shows great potential.

IMT-2030 (6G) identifies AI and communication as one of its six usage scenarios. Given that decision-making lies at the core of access algorithms and the abundance of computing power, we explore the use of reinforcement learning (RL)~\cite{russell2016artificial} to enhance the performance of the access algorithm. Within the RL framework, agents learn to optimize decisions based on feedback from their environment. Among various RL models, the Markov decision process (MDP) is fundamental for fully observable environments, while the partially observable Markov decision process (POMDP)~\cite{spaan2012partially} extends this framework to scenarios where agents act under incomplete state information. In the RMA problem, terminals must make transmission decisions without direct communication with each other, meaning no single terminal possesses complete system knowledge. This naturally aligns with the POMDP framework. To solve POMDPs, deep learning based algorithms can be applied to large-scale POMDPs~\cite{hausknecht2015deep,igl2018deep} and value iteration algorithms (VIA) designed for MDP can be adapted to solve small-to-medium sized POMDPs. For example, the point-based value iteration (PBVI)~\cite{pineau2003point} updates only a finite set of representative belief points, while the real-time dynamic programming with belief state (RTDP-Bel)~\cite{bonet2000} focuses on updates for high-probability states. In this paper, we formulate the RMA problem as a POMDP, aiming to synergize tree-splitting efficiency with RL-driven learning capabilities to develop an efficient reservation protocol.

The key contributions and novelties of this paper are summarized as follows.
\begin{itemize}
    \item This paper investigates the RMA problem in a setting where a reservation protocol governs channel reservation before data transmission. Leveraging the structural properties of the problem, we formulate the RMA problem as a POMDP. Based on this formulation, we propose a reservation protocol that optimizes the reservation process through carefully designed interaction mechanisms for information exchange. The reservation protocol is presented and optimized for the access stage in both wired and wireless local area networks (LANs). In particular, whether an ACK or NACK is feedback given the number of simultaneously attempting terminals, is the same as that assumed in wired LANs, such as Ethernet~\cite{metcalfe1976ethernet} and bus topology~\cite{forouzan2007data}, as well as in ALOHA networks and the IEEE 802.11 DCF for wireless LANs (WLANs). As a result, the conceived protocol are compatible with the current hardware platform of IEEE 802.11, thereby being capable of replacing existing reservation schemes of IEEE 802.11 DCF without much hardware upgrades.
    \item We consider a class of reservation protocols where terminals with identical information adopt identical probabilistic actions. Within this class, we propose an efficient reservation protocol that is learned through real-time dynamic programming (RTDP). Additionally, we introduce a pre-training strategy that significantly accelerates the learning. The proposed reservation protocol also ensures first-in-first-out (FIFO) service, guaranteeing that packets are transmitted in the order of their arrival contention cycles. This structured transmission facilitates a more tractable performance analysis.
    \item Compared to classic random access protocols (e.g., slotted ALOHA) and modern random access protocols (e.g., the CSMA/CA-based DCF in IEEE 802.11), the proposed reservation protocol achieves higher effective throughput, particularly under heavy traffic loads, due to its shorter reservation process. Extensive numerical simulations confirm the efficiency of the proposed approach.
\end{itemize}

The remainder of this paper is organized as follows. Section~\ref{sec:SystemModel} introduces the system model and an overview of the reservation protocol. In Section~\ref{sec:POMDP}, we present the POMDP-based optimization approach for the reservation protocol, followed by a genie-aided pre-training strategy to accelerate the learning in Section~\ref{sec:ReservationProtocolImplemantation}. The paper concludes with Section~\ref{sec:NumericalResults}, providing simulation results to highlight the performance of the proposed reservation protocol.

\section{System Model}\label{sec:SystemModel}
We consider a slotted-time system where multiple terminals attempt to transmit data packets over a shared communication channel without any assigned priority. Due to limited communication resources, only one terminal can successfully transmit at any given time.

\subsection{Reservation-Based Random Multiple Access}
To facilitate efficient channel access, the terminals employ a reservation protocol. To this end, time is divided into consecutive frames, with the channel reservation process initiated only at the beginning of each frame. The frame consists of multiple time slots allocated for both channel reservation and packet transmission. We assume that packet arrivals are stochastic and that only terminals with backlogged packets at the start of a frame participate in the reservation process. Terminals in the reservation process are referred to as active terminals. Let $N_k\in\mathbbm{N}^0$ denote the number of active terminals at the beginning of frame $k$. Given the stochastic nature of data arrivals, $N_k$ is a random variable whose distribution depends on the frame division strategies. In this paper, we consider the following two time-division-based strategies.
\begin{itemize}
    \item \textbf{Fixed frame}: Each frame has a predetermined duration $T$, and the channel reservation process does not occupy the entire frame. Active terminals initiate data packet transmissions after the reservation process concludes, with backlogged packets from previous frames transmitted first, following the order established in those frames.
    \item \textbf{Dynamic frame}: A new frame begins only after all active terminals in the current frame have successfully transmitted their data packets.
\end{itemize}
Terminals send a dedicated finish signal to indicate the end of the packet transmission. Through this mechanism, terminals can determine the number of terminals in the transmission queue. Specifically, terminal entries are dictated by channel feedback during the reservation, while departures are signaled by the finish signal. This ensures that terminals are aware of when to begin their transmissions and when a new frame starts.
\begin{figure}[t]
    \centering
    \includegraphics[width=\columnwidth]{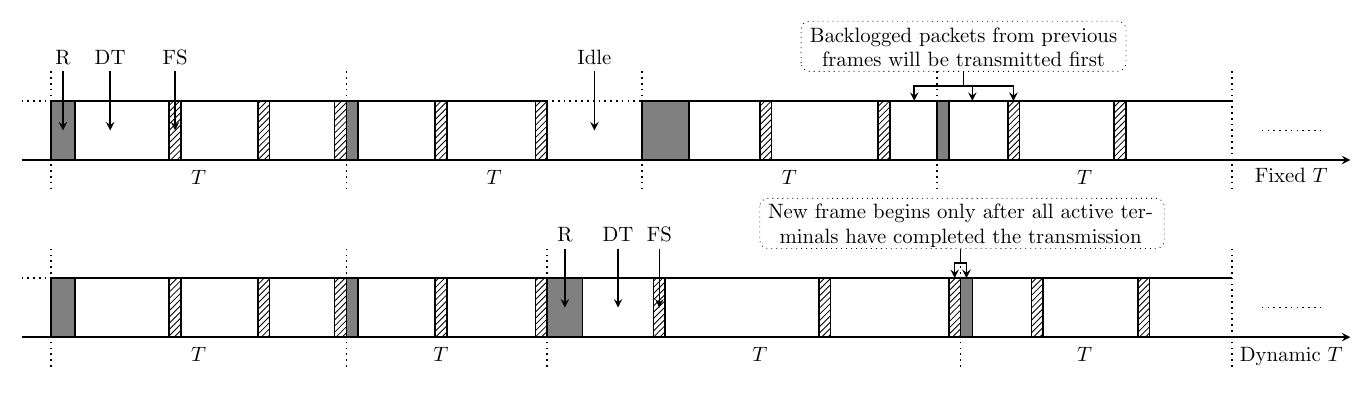}
    \caption{Illustration of the frame division strategies, where R, DT, and FS denote reservation, data transmission, and finish signal, respectively.}
    \label{fig:FrameDivision}
    \vspace{-1.5em}
\end{figure}
An illustration of the two strategies is provided in Fig.~\ref{fig:FrameDivision}. For the remainder of this paper, we assume the distribution of $N_k$ is known prior to each channel reservation process. It can be noted that we are interested in the access stage of the reservation-based multiaccess communications. In wireless environments, the physical-layer settings can be treated in the same manner as those in the IEEE 802.11 DCF, thereby allowing related issues to be handled similarly. In this context, the reservation process can be modeled from an information-theoretic perspective, as detailed in Section~\ref{sec:InformationTheorticModel}.

\subsection{Information-Theoretic Perspective on Reservation Process}\label{sec:InformationTheorticModel}
At the beginning of each frame, the active terminals compete for channel access through multiple rounds of reservation packet transmissions. The reservation packets may either contain a fraction of the data packet or serve solely as a reservation request. During the channel reservation phase, the system operates synchronously, providing error-free channel feedback $c_t$ in each time slot $t$. Specifically, $c_t=0$ denotes no transmission, $c_t=1$ indicates that a single terminal has transmitted a reservation packet, and $c_t=e$ signifies a collision. When $c_t=1$, the transmitting terminal wins the reservation and becomes inactive, while the remaining active terminals continue contending for channel access. Let $A_{i,t}$ be the binary transmission decision of active terminal $i$ at time slot $t$, where $A_{i,t}=1$ indicates the transmission of a reservation packet and $A_{i,t}=0$ otherwise. Then, we have
\begin{equation}
    c_t = \bigoplus_{i=1}^{N} A_{i,t} \triangleq \begin{cases}
        0 & \sum_{i=1}^{N}A_{i,t} = 0 \\
        1 & \sum_{i=1}^{N}A_{i,t} = 1 \\
        e & \text{otherwise},
    \end{cases}
\end{equation}
where $N$ is the number of active terminals. We assume that each reservation round, consisting of both the transmission of reservation packets and the broadcasting of channel feedback, occupies a single time slot.

The reservation protocol governs channel reservation by determining each active terminal's probability of transmitting the reservation packets in each time slot. Let $\theta_{i,t}$ be the status of active terminal $i$ at time slot $t$. The reservation protocol assigns the transmission probability for each active terminal based on its individual status $\theta_{i,t}$, which is updated according to the transmission decision and the corresponding channel feedback. A schematic representation of the active terminal's status evolution is provided in Fig.~\ref{fig:ChannelModel}.
\begin{figure}[t]
    \centering
    \includegraphics[width=\columnwidth]{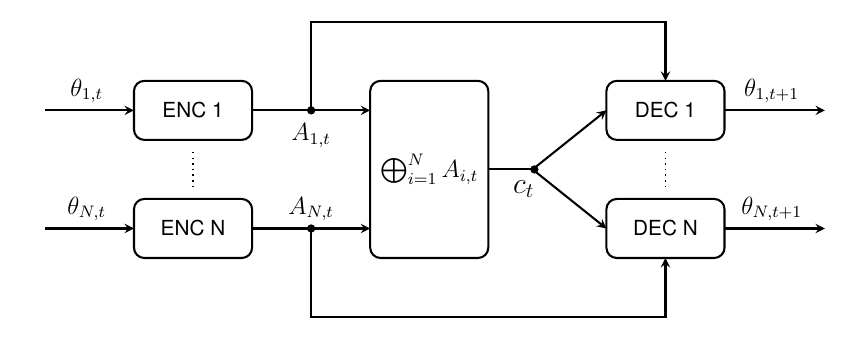}
    \caption{Channel model representing active terminals' status evolution. The decision-making and status update are modeled as an encoder (ENC) and a decoder (DEC), respectively.}
    \label{fig:ChannelModel}
    \vspace{-1.5em}
\end{figure}

\subsection{Tree-Splitting-Based Reservation}\label{sec:ReservationProtocol}
In this subsection, we conceive a tree-splitting-based reservation protocol, under which active terminals with identical decision histories are assigned the same transmission probability. As a result, active terminals can be grouped into clusters based on their decision histories. Then, the reservation protocol can be characterized by a vector $\bm{p}_t=[p_{t,1},p_{t,2},...,p_{t,M_t}]$, where $0\leq p_{t,i}\leq 1$ represents the transmission probability for the terminals in cluster $i$, and $M_t$ denotes the total number of clusters at time slot $t$. Initially, all active terminals belong to a single cluster.
\begin{remark}
    According to~\cite{tsybakov1985survey}, a reservation protocol is defined as a function of (1) the generation time of the data packet, (2) the channel feedback, and (3) the terminal's decision history. Since no priority is assigned among terminals in this paper, the data packet generation time does not affect the reservation protocol. Moreover, as channel feedback is broadcast to all terminals without error, the reservation protocol relies solely on the terminal's decision history. Therefore, we focus on the reservation protocol outlined above.
\end{remark}

To detail the operation of the tree-splitting-based reservation protocol, we first characterize the evolution of $M_t$. To this end, we consider a typical time slot in which a collision occurs. In this case, the active terminals involved in the collision are relocated to newly generated clusters. However, to ensure that active terminals can reliably identify their cluster membership, which is an essential requirement for acting according to the reservation protocol, only one new cluster is created per collision, even if the colliding reservation packets originate from terminals in different clusters with distinct decision histories. When no transmission occurs, or when an active terminal wins the reservation, the active terminals retain their respective cluster memberships. Consequently, the evolution of $M_t$ is captured by
\begin{equation}\label{eq:clusterNumberEvolution}
    M_{t+1} = \begin{cases}
        M_t     & c_t \neq e \\
        M_t + 1 & c_t = e.
    \end{cases}
\end{equation}
It is important to note that the number of terminals in each cluster is non-increasing and some clusters may become empty during the reservation process. Additionally, the evolution of $M_t$ depends solely on the channel feedback.

The status of an active terminal at time slot $t$ can be characterized by $\theta_t\triangleq [M_t,j_t,\Xi_t]$, where $j_t$ is the cluster index to which the terminal belongs and $\Xi_t$ represents the information about the other terminals. The details of $\Xi_t$ and its evolution will be examined in subsequent sections. Then, the operation of an active terminal under the reservation protocol proceeds as follows.
\begin{enumerate}
    \item The terminal initializes $\theta_1 = [1,1,\Xi_1]$.
    \item At time slot $t$, the terminal transmits the reservation packet with probability $p_{t,j_t}$, as specified by the reservation protocol.
    \item Based on the transmission decision and channel feedback, the terminal operates as follows.
          \begin{itemize}
              \item When $A_t=1$ and $c_t = 1$, the terminal wins the competition and becomes inactive.
              \item When $A_t=1$ and $c_t = e$, the terminal updates $\theta_{t+1} = [M_t+1,M_t+1,\Xi_{t+1}(e)]$, where $\Xi_{t+1}(c_t)$ is the updated information about other terminals when $c_t$ is received.
              \item When $A_t=0$, the terminal updates as follows.
                    \begin{equation}
                        \theta_{t+1} = \begin{cases}
                            [M_t,j_t,\Xi_{t+1}(c_t)]   & c_t = 0  \\
                            [M_t,j_t,\Xi_{t+1}(c_t)]   & c_t = 1  \\
                            [M_t+1,j_t,\Xi_{t+1}(c_t)] & c_t = e.
                        \end{cases}
                    \end{equation}
          \end{itemize}
    \item Increment $t$ by $1$ and go to step $2$.
\end{enumerate}
The evolution of the system under the reservation protocol is illustrated in Fig.~\ref{fig:ReservationProtocol}.
\begin{figure}[t]
    \centering
    \includegraphics[width=\columnwidth]{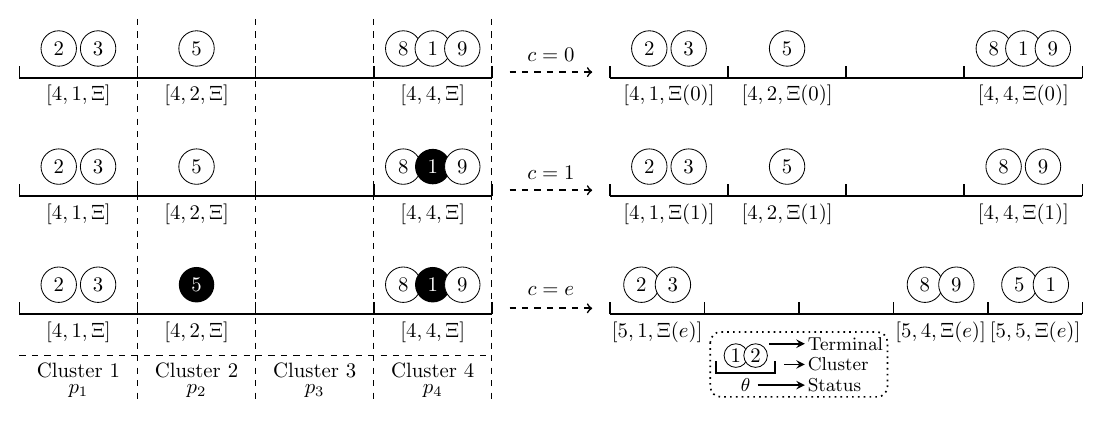}
    \caption{Illustration of the system evolution under the reservation protocol. Terminals that transmitted the reservation packets are highlighted in black, and the indices are assigned randomly for illustrative purposes only.}
    \label{fig:ReservationProtocol}
    \vspace{-1.5em}
\end{figure}
It is worth emphasizing that the same reservation protocol operates independently at each terminal and the operation relies solely on the information available to the individual terminal, specifically the channel feedback and the terminal's decision history.

\section{A POMDP Framework for Tree-Splitting-Based Reservation}
In this paper, we aim to find the optimal reservation protocol that adheres to the characteristics outlined in Section~\ref{sec:ReservationProtocol}, with the objective of minimizing the expected number of time slots required for channel reservation.

\subsection{A Goal-POMDP Formulation}\label{sec:POMDP}
We first notice that terminals can determine $M_t$ and $j_t$ in $\theta_t$ at any given time. Meanwhile, since no priority is assigned among terminals, it is sufficient for $\Xi_t$ to contain the distribution of active terminals across clusters. However, the terminals can only estimate this information. To account for the resulting uncertainty, we introduce the belief state $b$, which represents a probability distribution over all possible outcomes when active terminals are distributed across clusters. Specifically, let $G^N_M$ represent the set of all possible outcomes when $N$ identical terminals are distributed across $M$ different clusters. The belief state is then defined as a probability distribution over the set $\bigcup_{N,M}G^N_M$, where $N\geq0$ and $M\geq1$. Using the belief state, we can formulate the problem as a goal partially observable Markov decision process (Goal-POMDP). Specifically, the Goal-POMDP is formulated as follows.
\begin{itemize}
    \item State space $\mathcal{S}$: The state captures the distribution of the active terminals across clusters. Hence,
          \begin{equation}
              \mathcal{S} = \bigcup_{N,M}G^N_M,
          \end{equation}
          where $N\geq1$, and $M\geq1$.
    \item Target state space $\mathcal{T}$: The target state signifies the end of the reservation process, or equivalently, the point at which no active terminals remain. Hence,
          \begin{equation}
              \mathcal{T} = \bigcup_{M\geq1}G^0_M.
          \end{equation}
          Note that $\mathcal{S}$ and $\mathcal{T}$ do not contain $M_t$ and $j_t$ since these variables exist solely to guide the terminals in executing actions and responding to feedbacks. Hence, these two variables do not affect the optimization. The state is denoted by $s$. At state $s$, we also denote $s_N$ and $s_M$ as the number of active terminals and clusters, respectively.
    \item Action space $\mathcal{A}$: In each time slot, the reservation protocol specifies the transmission probability for each cluster. Hence, the feasible action depends on the state and, at state $s$, the action is denoted by
          \begin{equation}
              \bm{p}^s\in\mathcal{A}^s=\left\{\bm{p}\mid \bm{p}\in[0,1]^{s_M}\right\},
          \end{equation}
          where $[0,1]^{s_M}$ represents the set of vectors of size $s_M$, with each element taking a value within the range $[0,1]$. Then, the action space is $\mathcal{A} = \bigcup_{s\in\mathcal{S}}\mathcal{A}^s$.
    \item State transition probabilities $\mathcal{P}$: The probability of transitioning from state $s$ to state $s'$ upon applying action $\bm{p}^s$ is denoted by $P(s'\mid s,\bm{p}^s)$. Let $\eta_s^i$ be the number of terminals in the $i$-th cluster of state $s$. Then, we have
          \begin{equation}
              P(s'\mid s,\bm{p}^s) = \prod_{i=1}^{s_M}{\eta_s^i\choose \eta_{s'}^i}(1-p_i)^{\eta^i_{s'}}p_i^{\eta_s^i - \eta_{s'}^i},
          \end{equation}
          when any of the following three conditions hold.
          \begin{equation}\label{eq:TransitionProbabilityConditions}
              \begin{dcases}
                  \mathbbm{1}\left\{s'_N = s_N, s'_M = s_M+1, \sum_{i=1}^{s_M}(\eta_s^i - \eta_{s'}^i)>1\right\}=1 \\
                  \mathbbm{1}\left\{s'_N = s_N-1, s'_M =s_M, \sum_{i=1}^{s_M}(\eta_s^i - \eta_{s'}^i)=1\right\}=1  \\
                  \mathbbm{1}\{s'=s\}=1,
              \end{dcases}
          \end{equation}
          where $\mathbbm{1}\{A\}$ is the indicator function, which equals $1$ if $A$ is true and $0$ otherwise. If none of the above conditions hold, then $P(s'\mid s,\bm{p}^s)=0$.
    \item Observation space $\mathcal{O}$: The observation space consists of channel responses. Hence, $o\in\mathcal{O} = \{0,1,e\}$.
    \item Observation probabilities $\mathcal{Q}$: The probability of observing $o$ upon transitioning from state $s$ to state $s'$ after applying action $\bm{p}^s$ is denoted by $Q(o\mid s,s',\bm{p}^s)$ and satisfies
          \begin{equation}
              Q(0\mid s,s',\bm{p}^s) = \mathbbm{1}\{s'=s\},
          \end{equation}
          \begin{equation}
              Q(1\mid s,s',\bm{p}^s) = \mathbbm{1}\{s_N' = s_N-1, s_M' = s_M\},
          \end{equation}
          \begin{equation}
              Q(e\mid s,s',\bm{p}^s) = \mathbbm{1}\{s_N' = s_N, s_M' = s_M+1\}.
          \end{equation}
          Notably, the probability is independent of the action $\bm{p}^s$. Hence, for the remainder of this paper, we rewrite $Q(o\mid s,s',\bm{p}^s)$ as $Q(o\mid s,s')$.
    \item Cost $\mathcal{C}$: Since the objective is to minimize the expected time slots required to reach the target state, we have
          \begin{equation}
              C(s) = \begin{cases}
                  1 & s\in\mathcal{S}  \\
                  0 & s\in\mathcal{T}.
              \end{cases}
          \end{equation}
    \item Initial belief state $b_0$: At the beginning of each frame, all active terminals belong be a single cluster. Hence, $b_0$ is a probability distribution over the set $ \bigcup_{N\geq0}G^N_1$.
\end{itemize}
It is worth emphasizing that under the Goal-POMDP formulating, all active terminals can transmit the reservation packet in each time slot with nonzero probability. Thus, the Goal-POMDP formulation generalizes the canonical algorithms that only allow a predefined subset of active terminals to transmit the reservation packet in each time slot. Moreover, once the optimal policy for the Goal-POMDP is obtained, it can be applied independently by each terminal, relying solely on the channel feedback, which is broadcast to all terminals simultaneously. Hence, all terminals can seamlessly cooperate in executing the optimal action when starting with the same initial belief state.

\subsection{The Equivalent Belief MDP}
Since the terminals cannot fully observe the system state, canonical algorithms for fully observable MDP, such as the value iteration algorithm (VIA)~\cite{russell2016artificial}, cannot be directly applied. To find the optimal policy for the Goal-POMDP, a common approach is to reformulate it as a fully observable belief MDP~\cite{kaelbling1998}, denoted by $\mathcal{M}^b$. In this formulation, the belief state $b$, which is a probability distribution over $\mathcal{S}\cup\mathcal{T}$, is treated as the state of $\mathcal{M}^b$. The resulting state space is denoted by $\mathcal{B}$. To formally define $\mathcal{M}^b$, we first establish that for any belief state $b$, all states $s$ with nonzero probability share the same number of clusters $s_M$. This property follows from two key observations. First, the evolution of $s_M$ is deterministic and depends solely on the channel feedback, as given in~\eqref{eq:clusterNumberEvolution}. Second, the initial belief state assigns nonzero probability only to states where $s_M = 1$. Since the number of clusters evolves deterministically and originates from a single cluster, it remains consistent across all states with nonzero probability in any belief state $b$. Formally, for any two states $s, s' \in \{s \mid b(s) > 0\}$, we have $s_M = s'_M$, where $b(s)$ denotes the probability of being in state $s$ at belief state $b$.

Given this property, the feasible action $\bm{p}^b$ at belief state $b\in\mathcal{B}$ is given by
\begin{equation}\label{eq:BeliefMDPAction}
    \bm{p}^b \in\mathcal{A}^b = \{\bm{p}\mid \bm{p}\in[0,1]^{b_M}\},
\end{equation}
where $b_M$ is the number of clusters in states with nonzero probability at belief state $b$. The evolution of the belief state depends on the chosen action and the resulting observation. Specifically, the belief state resulting from observing $o$ after applying action $\bm{p}^b$ at belief state $b$ is given by
\begin{equation}\label{eq:BeliefMDPTransition}
    b^o_{\bm{p}^b}(s) = \begin{dcases}
        \frac{b_{\bm{p}^b}(s,o)}{b_{\bm{p}^b}(o)} & b_{\bm{p}^b}(o)\neq0 \\
        0                                         & b_{\bm{p}^b}(o)=0,
    \end{dcases}
\end{equation}
where
\begin{equation}
    b_{\bm{p}^b}(s,o) = \sum_{s'\in\mathcal{S}}Q(o\mid s',s)P(s\mid s',\bm{p}^b)b(s'),
\end{equation}
is the probability of being in state $s$ and observing $o$ after applying action $\bm{p}^b$ at belief state $b$, and
\begin{equation}
    b_{\bm{p}^b}(o) = \sum_{s\in\mathcal{S}}b_{\bm{p}^b}(s,o),
\end{equation}
is the probability of observing $o$ after applying action $\bm{p}^b$ at belief state $b$. The target state of $\mathcal{M}^b$ is defined as the belief state $b$ such that $b(s)=0$ for any non-target state $s\in\mathcal{S}$. Formally, the target state space $\mathcal{T}_{bel} \triangleq \{b\mid b(s)=0, \forall s\in\mathcal{S}\}$. It is important to note that the system will only terminate upon reaching the target state, even if no active terminals remain beforehand. The cost at belief state $b$ is the average cost weighted by state probabilities. Specifically, the cost is given by
\begin{equation}\label{eq:BeliefMDPcost}
    C^b(b) = \sum_{s\in\mathcal{S}}C(s)b(s) = \begin{cases}
        1 & b\in\mathcal{B}        \\
        0 & b\in\mathcal{T}_{bel}.
    \end{cases}
\end{equation}
With these quantities, $\mathcal{M}^b$ can be formulated as follows.
\begin{itemize}
    \item The state space is $\mathcal{B}$.
    \item The target state space is $\mathcal{T}_{bel}$.
    \item The action space $\mathcal{A}_{bel}=\bigcup_{b\in\mathcal{B}}\mathcal{A}^b$.
    \item The transition probabilities between belief states are characterized by~\eqref{eq:BeliefMDPTransition}.
    \item The cost is given by~\eqref{eq:BeliefMDPcost}.
\end{itemize}
It is noteworthy that a belief MDP is essentially an MDP in which the states are probability distributions. As a result, VIA can theoretically be applied to solve the belief MDP. Let $V(b)$ denote the value function of belief state $b$. The Bellman equation for the belief MDP is given by
\begin{equation}
    V(b) = \min_{\bm{p}^b\in\mathcal{A}^b}\left\{C^b(b) + \sum_{o\in\mathcal{O}}b_{\bm{p}^b}(o)V(b_{\bm{p}^b}^o)\right\},
\end{equation}
for all $b\in\mathcal{B}$. Given the Bellman equation, the optimal action can be computed numerically using VIA. However, for certain belief states, the problem's inherent properties allow us to determine the optimal action analytically. We formalize this in the following theorem.
\begin{theorem}\label{thm:OptimalActionBeliefMDP}
    Let $\mathcal{S}_1$ denote the set of state $s$ where $s_N = 1$, and let $\mathcal{B}_1$ denote the set of belief state with $b(s)=0$ for $s\in\mathcal{S}\setminus\mathcal{S}_1$. The policy $\pi$, which selects all clusters to transmit a reservation packet with probability one, is optimal for any belief state $b\in\mathcal{B}_1$. Formally, the optimal action at $b\in\mathcal{B}_1$ is
    \begin{equation}
        \bm{p}^{b,*}_i = \bm{p}^{b,\pi}_i = 1,
    \end{equation}
    for $1\leq i\leq b_M$ and the value function $V(b)=1$.
\end{theorem}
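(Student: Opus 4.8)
The plan is to establish the result by a two-sided bound on $V(b)$ for $b\in\mathcal{B}_1$: a universal lower bound $V(b)\geq 1$ that holds for \emph{every} admissible action, together with a matching upper bound $V(b)\leq 1$ attained specifically by the all-ones action $\bm{p}^{b,\pi}$. Since $\bm{p}^{b,\pi}$ meets the lower bound, it is optimal and the common value is $V(b)=1$.

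First I would prove the lower bound. Observe that $\mathcal{B}_1\subseteq\mathcal{B}$, because a belief concentrated on states with $s_N=1$ is not a target belief state (target beliefs put all their mass on $\mathcal{T}=\bigcup_{M\geq1}G^0_M$). Hence by~\eqref{eq:BeliefMDPcost} we have $C^b(b)=1$. Because $V$ represents the minimal expected number of time slots to reach $\mathcal{T}_{bel}$, it is nonnegative throughout $\mathcal{B}$. Substituting these two facts into the Bellman equation shows that for any feasible $\bm{p}^b$ the bracketed quantity satisfies $C^b(b)+\sum_{o}b_{\bm{p}^b}(o)V(b^o_{\bm{p}^b})\geq 1+0=1$, and therefore $V(b)\geq 1$.

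Next I would verify the upper bound by tracking the dynamics under the all-ones action. Fix any state $s$ in the support of $b$; by hypothesis $s_N=1$, so exactly one cluster holds the single active terminal and all other clusters are empty. Under $\bm{p}^{b,\pi}_i=1$ that lone terminal transmits with probability one, whence $\sum_i A_{i,t}=1$ and the feedback is deterministically $c_t=1$. Through the observation probabilities $\mathcal{Q}$, this forces $o=1$ with probability one and a transition to a state $s'$ with $s'_N=s_N-1=0$, i.e. $s'\in\mathcal{T}$. This holds uniformly across the entire support of $b$, so the successor belief $b^{1}_{\bm{p}^{b,\pi}}$ places all its mass on $\mathcal{T}$ and thus lies in $\mathcal{T}_{bel}$, giving $V(b^{1}_{\bm{p}^{b,\pi}})=0$. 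Plugging this into the Bellman right-hand side yields $C^b(b)+\sum_{o}b_{\bm{p}^{b,\pi}}(o)V(b^o_{\bm{p}^{b,\pi}})=1+1\cdot 0=1$, so $V(b)\leq 1$. Combining the two bounds gives $V(b)=1$ and shows $\bm{p}^{b,\pi}$ attains the minimum, hence is optimal.

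The arithmetic above is routine; the one point deserving care---and the main obstacle---is the uniformity of the transition across the possibly mixed support of $b$. Since $b$ may spread its mass over several states in $\mathcal{S}_1$ that differ in \emph{which} cluster contains the single terminal, I must confirm that each such state maps, under the all-ones action, to a target state, so that no probability mass leaks into a non-target belief. This follows because $s_N=1$ guarantees exactly one transmitting terminal regardless of its cluster, making $o=1$ and the subsequent emptying of the queue deterministic for every state in the support; I would spell this out explicitly to rule out the $o=0$ and $o=e$ branches, which are impossible when precisely one terminal transmits.
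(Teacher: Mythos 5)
Your two-sided bound is essentially the paper's argument: the lower bound $V(b)\geq 1$ comes from $C^b(b)=1$ plus nonnegativity of the value function, and the upper bound comes from showing the all-ones action reaches a target belief in one step. The paper phrases the same idea through the VIA iterates $V_i$ (showing $V_{i+1}(b)=C^b(b)$ at every iteration and passing to the limit), whereas you argue directly on the Bellman equation; that difference is cosmetic.

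There is, however, one concrete slip in your upper-bound step. You assert that every state in the support of $b$ has $s_N=1$ and that the branches $o=0$ and $o=e$ are therefore impossible. But $\mathcal{B}_1$ only requires $b(s)=0$ for $s\in\mathcal{S}\setminus\mathcal{S}_1$; since $\mathcal{T}$ is disjoint from $\mathcal{S}$, a belief in $\mathcal{B}_1$ may also place positive mass on zero-terminal states in $\mathcal{T}$ (and such beliefs are not in $\mathcal{T}_{bel}$, so the process has not yet terminated there). In that case the observation $o=0$ occurs with positive probability under the all-ones action. The paper treats this explicitly as a second case: when no active terminal remains, all clusters were instructed to transmit, the feedback is $c=0$, and the terminals can then infer that no active terminal exists, so the posterior is again a target belief with continuation value zero. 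Your conclusion survives once this branch is added, but as written your claim that $o=0$ cannot occur is false for part of $\mathcal{B}_1$, so you should add this case rather than rule it out. (You also omit the paper's preliminary reachability argument for $\mathcal{B}_1$, but that is not needed for the optimality claim itself.)
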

\begin{proof}
    First, we show that belief state $b\in\mathcal{B}_1$ is reachable from the initial belief state $b_0$. Recall that $b_0$ is a probability distribution over the state $s$ with $s_M=1$ and $s_N\leq N_{max}$, where we assume the number of active terminals is finite and upper bounded by $N_{max}$. Now, consider the case where the channel feedback $c=1$ is received, indicating that one terminal has won the competition and become inactive. Consequently, the resulting belief state $b'$ satisfies $b(s)\geq0$ only for state $s$ with $s_N\leq N_{max}-1$. If the actual number of active terminals at the beginning of the reservation process is at least $N_{max}-1$, the channel feedback $c=1$ will be received at least $N_{max}-1$ times. After $N_{max}-1$ such events, the belief state $b$ will satisfy $b(s)\geq0$ only for the state $s$ with $s_N=1$. Since this scenario occurs with nonzero probability, we conclude that the belief state $b\in\mathcal{B}_1$ is reachable. Notably, the terminals do not know which cluster contains the remaining active terminal.

    In the sequel, we prove that the action specified in the theorem is optimal when the system reaches the belief state $b\in\mathcal{B}_1$. To this end, we leverage the iterative nature of VIA. The interim value function $V_i(b)$ at iteration $i$ of VIA is updated as follows.
    \begin{equation}\label{eq:BeliefStateVIAUpdate}
        V_{i+1}(b) = \min_{\bm{p}^b\in\mathcal{A}^b}\left\{C^b(b) + \sum_{o\in\mathcal{O}}b_{\bm{p}^b}(o)V_i(b_{\bm{p}^b}^o)\right\}.
    \end{equation}
    As $i\rightarrow\infty$, $V_i(b)$ converges to the value function $V(b)$~\cite{russell2016artificial}. Therefore, it suffices to show that the action specified in the theorem is optimal for $V_i(b)$ at any iteration $i$. To prove this, we first notice from~\eqref{eq:BeliefStateVIAUpdate} that $V_i(b)\geq0$ under non-negative initializations. Hence, we have
    \begin{equation}\label{eq:InterimValueFunctionLowerBound}
        V_{i+1}(b)\geq C^b(b).
    \end{equation}
    Now, we consider applying the policy $\bm{p}^{b,\pi}$ at belief state $b\in\mathcal{B}_1$. Since at most one active terminal exists at belief state $b\in\mathcal{B}_1$, we distinguish between the following two cases.
    \begin{itemize}
        \item When there is only one active terminal, applying $\bm{p}^{b,\pi}$ results in channel feedback $c=1$, leading to a belief state $b'$ where $b'(s)\geq0$ only for state $s$ with $s_N=0$, which corresponds to a target state.
        \item When there is no active terminal, applying $\bm{p}^{b,\pi}$ results in channel feedback $c=0$. Since all clusters were selected for transmission, terminals can correctly infer that no active terminal remains, which is also a target state.
    \end{itemize}
    Since the target state has value function equal to zero by definition, we have
    \begin{equation}\label{eq:InterimValueFunctionOptimalAction}
        V_{i+1}^{\bm{p}^{b,\pi}}(b) = C^b(b),
    \end{equation}
    where $V_{i+1}^{\bm{p}^{b,\pi}}(b)$ is the interim value function resulting from taking action $\bm{p}^{b,\pi}$ at belief state $b$ and iteration $i$. Combining~\eqref{eq:BeliefStateVIAUpdate},~\eqref{eq:InterimValueFunctionLowerBound}, and~\eqref{eq:InterimValueFunctionOptimalAction} yields
    \begin{equation}\label{eq:InterimValueFunctionResult}
        V_{i+1}(b) = V_{i+1}^{\bm{p}^{b,\pi}}(b) = C^b(b) = 1.
    \end{equation}
    Since~\eqref{eq:InterimValueFunctionResult} holds at any iteration $i$ and $\lim_{i\rightarrow\infty}V_i(b) = V(b)$, we can conclude the proof.
\end{proof}

Despite Theorem~\ref{thm:OptimalActionBeliefMDP}, numerically solving the belief MDP using VIA remains challenging due to the infinite and continuous nature of the state space. While discretizing the probability distribution can mitigate this issue, achieving sufficiently accurate approximations requires solving an MDP with a prohibitively large state space. To address this, for belief states $b\notin\mathcal{B}_1$, we adopt the RTDP with belief state (RTDP-Bel)~\cite{bonet2000}.

\subsection{RTDP-Bel: An RL-Empowered Solution}\label{sec:RTDP-Bel}
RTDP-Bel defines a function $\tilde{V}(b)$ for belief state $b$ and maintains a hash table $\mathcal{H}$ to store these values. To approximate $V(b)$ using $\tilde{V}(b)$, RTDP-Bel simulates the system for multiple trials, with each trial terminating upon reaching the target state. Along the trajectory of each trial, $\tilde{V}(b)$ is either added to the hash table if $b$ is encountered for the first time or updated if $b$ has been visited previously. The same $\mathcal{H}$ is used across the trials, and as the number of trials increases, $\tilde{V}(b)$ is continuously updated to approximate the value function $V(b)$. Specifically, at each step within a trial, assuming the system is at belief state $b$, $\tilde{V}(b)$ is updated by solving the following optimization problem.
\begin{equation}\label{eq:RTDP-BelUpdate}
    \tilde{V}(b) = \min_{\bm{p}^b\in\mathcal{A}^b}\left\{C^b(b) + \sum_{o\in\mathcal{O}}b_{\bm{p}^b}(o)\tilde{V}(b_{\bm{p}^b}^o)\right\},
\end{equation}
where $\tilde{V}(b_{\bm{p}^b}^o)$ is retrieved from $\mathcal{H}$ if $b_{\bm{p}^b}^o$ has been visited before and initialized otherwise. To advance the simulation, the optimal action from~\eqref{eq:RTDP-BelUpdate} is applied, and the resulting observation is obtained. Then, the resulting belief state is computed using~\eqref{eq:BeliefMDPTransition}. Unlike VIA, which exhaustively evaluates all states, RTDP-Bel focuses only on states encountered in the trials and progressively learns to optimize its performance throughout the execution.

\subsection{Quantization and Discretization for Efficient RTDP-Bel}
The computational complexity of a direct implementation of RTDP-Bel is substantial. In practice, we introduce the following modifications to improve the efficiency.

The continuous nature of the probability distribution can lead to an excessively large hash table $\mathcal{H}$. To control the growth of $\mathcal{H}$ and ensure consistent updates to $\tilde{V}(b)$, we quantize the belief state $b$. Specifically, each probability $b(s)$ is quantized into discrete values based on a quantization parameter $q\in\mathbbm{Z}^+$ before interacting with $\mathcal{H}$. The quantized probability $q(b(s))$ is the given by
\begin{equation}\label{eq:BeliefStateQuantization}
    q(b(s)) = \frac{\lfloor b(s)\cdot q\rceil}{q},
\end{equation}
where $\lfloor\cdot\rceil$ denotes the rounding operator. It is important to note that these quantized values do not necessarily form a valid probability distribution, as they may not sum to one. Moreover, quantization is applied only during interactions with $\mathcal{H}$. This quantization ensures that the size of $\mathcal{H}$ grows more gradually and that $\tilde{V}(q(b))$ is updated whenever the system reaches a belief state that quantizes to $q(b)$, allowing for more frequent updates. However, the choice of $q$ is critical. A large $q$ causes rapid hash table growth and insufficient updates to $\tilde{V}(q(b))$. Conversely, a small $q$ may treat distinct belief states as identical.

The decision variable $\bm{p}^b$ in~\eqref{eq:RTDP-BelUpdate} is continuous, which increases computational complexity, as the optimization problem must be solved repeatedly in RTDP-Bel. To mitigate this, we approximate the solution by discretizing the decision variable. Specifically, the transmission probability $\bm{p}^b_i$ for each cluster is restricted to the following discrete set.
\begin{equation}\label{eq:DecisionVariableDiscrete}
    \bm{p}^b_i \in\left\{\frac{k}{d},\quad k\in\{0,1,...,d\}\right\},
\end{equation}
where $d$ governs the trade-off between approximation accuracy and computational cost. A larger $d$ yields a more precise approximation but increases computational cost.

Combining the practical enhancements with previously established results, RTDP-Bel for a single reservation process is summarized in Algorithm~\ref{Alg:RTDP-BelSingleTrial}.
\begin{algorithm}[!t]
    \begin{algorithmic}[1]
        \Procedure{RTDP-Bel}{$\mathcal{H},b_0$}
        \State Initialize $b = b_0$.
        \While{$b(s)>0$ for any $s$ with $s_N>0$}
        \If{$b\in\mathcal{B}_1$}
        \State $(\bm{p}^{b,*},\tilde{V}(b))\leftarrow$ obtain using Theorem~\ref{thm:OptimalActionBeliefMDP}.
        \Else
        \State $\{b^o_{\bm{p}^b}\}\leftarrow$ obtain using~\eqref{eq:BeliefMDPTransition} for each $\bm{p}^b$.
        \For{$b'\in\{b^o_{\bm{p}^b}\}$}
        \State $q(b')\leftarrow$ quantize $b'$ using~\eqref{eq:BeliefStateQuantization}.
        \If{$q(b')\in\mathcal{H}$}
        \State $\tilde{V}(q(b'))\leftarrow$ retrieve from $\mathcal{H}$.
        \Else
        \State $\tilde{V}(q(b'))\leftarrow$ initialization.
        \EndIf
        \EndFor
        \State $(\bm{p}^{b,*},\tilde{V}(b))\leftarrow$ solution to~\eqref{eq:RTDP-BelUpdate} and~\eqref{eq:DecisionVariableDiscrete}.
        \EndIf
        \State $q(b)\leftarrow$ quantize $b$ using~\eqref{eq:BeliefStateQuantization}.
        \If{$q(b)\in\mathcal{H}$}
        \State $\tilde{V}(q(b))\leftarrow$ update using $\tilde{V}(b)$.
        \Else
        \State Create entry $(q(b),\tilde{V}(b))$ in $\mathcal{H}$.
        \EndIf
        \State Apply the action $\bm{p}^{b,*}$.
        \State Obtain the resulting observation $o^*$.
        \State $b=b_{\bm{p}^{b,*}}^{o^*}$.
        \EndWhile
        \State \Return $\mathcal{H}$ and $\bm{p}^{b,*}$ for $q(b)\in\mathcal{H}$.
        \EndProcedure
    \end{algorithmic}
    \caption{RTDP-Bel for A Single Reservation Process}
    \label{Alg:RTDP-BelSingleTrial}
\end{algorithm}
The algorithm is executed for each reservation process, and as the number of executions increases, $\tilde{V}(b)$ progressively converges to $V(b)$.

\subsection{Hidden and Exposed Nodes}
Under the proposed reservation protocol, the hidden and exposed node problem, commonly encountered in wireless networks, can be alleviated similarly to the IEEE 802.11 RTS/CTS mechanism. Specifically, terminals first transmit an RTS packet before transmitting data packets, regardless of potential interference from other terminals. Packet transmission proceeds if no interference is detected, as indicated by the successful reception of a CTS packet. In the presence of interference, terminals defer transmission until the ongoing transmission is complete, which is facilitated by the retransmission of RTS packets. Afterward, the interfering terminals initiate an additional channel reservation using the proposed reservation protocol to eliminate further interference.

\section{Genie-Aided Pre-Training}\label{sec:ReservationProtocolImplemantation}
This section presents a pre-training strategy for RTDP-Bel based on an MDP formulated with genie assistance. Specifically, the value function of the genie-aided MDP provides an informative initialization for RTDP-Bel. While the corresponding optimal policy cannot be implemented without genie assistance, it provides a benchmark for evaluating the proposed reservation protocol. We begin by formulating the genie-aided MDP.

\subsection{The Formulation of Genie-Aided MDP}~\label{sec:MDP}
We consider a genie-aided problem where a genie provides terminals with the distribution of active terminals across clusters. With this additional information, the problem can be formulated as a Goal-MDP, characterized by $\mathcal{M} = (\mathcal{S},\mathcal{T},\mathcal{A},\mathcal{P},\mathcal{C})$, where each component is as specified in Section~\ref{sec:POMDP}. Then, the value function of $\mathcal{M}$ serves as the informative initialization for $\tilde{V}(b)$ and its optimal policy provides a benchmark for evaluating the optimality of the proposed reservation protocol.

We first establish that the optimal policy for $\mathcal{M}$ is well-defined. According to~\cite{bertsekas1995}, it suffices to show that the target state is reachable from every state. To this end, we consider a policy under which all active terminals transmit a reservation packet with a unified probability $p$ in each time slot. Then, for state $s$ with $s_N<\infty$, the expected number of time slots $\mathcal{E}$ required to reach the target state is given by
\begin{equation}
    \mathcal{E} = \sum_{k=1}^{s_N}\frac{1}{kp(1-p)^{k-1}}.
\end{equation}
Since the number of active terminals at each time slot is known in $\mathcal{M}$, we consider the policy that adopts $p = \frac{1}{k}$ in each time slot, where $k$ is the current number of active terminals. Then, we have
\begin{equation}
    \mathcal{E} = \sum_{k=1}^{s_N}\frac{1}{(1-\frac{1}{k})^{k-1}} \leq s_N\cdot e < \infty,
\end{equation}
where we use the inequality $(1-\frac{1}{k})^{k-1}\geq e^{-1}$. Thus, the target state is reachable from any state, ensuring that the optimal policy is well-defined. The value function, denoted by $V(s)$, satisfies the following Bellman equation.
\begin{equation}\label{eq:BellmanEquation}
    V(s) = \min_{\bm{p}^s\in\mathcal{A}^s}\left\{C(s) + \sum_{s'\in\mathcal{S}}P(s'\mid s,\bm{p}^s)V(s')\right\},
\end{equation}
where $V(s) = 0$ for $s\in\mathcal{T}$ by definition. To solve~\eqref{eq:BellmanEquation}, we adopt the canonical VIA, which iteratively updates interim value functions until convergence. Specifically, let $V_i(s)$ be the interim value function at iteration $i$ of VIA. Then, for all $s\in\mathcal{S}$, the update rule is given by
\begin{equation}\label{eq:VIA}
    V_{i+1}(s) = \min_{\bm{p}^s\in\mathcal{A}^s}\left\{C(s) + \sum_{s'\in\mathcal{S}}P(s'\mid s,\bm{p}^s)V_i(s')\right\},
\end{equation}
where $V_0(s) = 0$ for $s\in\mathcal{S}$ by initialization, and $V_i(s) = V(s) = 0$ for $s\in\mathcal{T}$ at any iteration $i$ by definition. When updated according to~\eqref{eq:VIA}, we have $\lim_{i\rightarrow\infty}V_i(s) = V(s)$ for $s\in\mathcal{S}$~\cite{russell2016artificial}. Upon convergence, the optimal policy can be obtained as
\begin{equation}\label{eq:OptimalPolicy}
    \pi^*(s) = \argmin_{\bm{p}^s\in\mathcal{A}^s}\left\{C(s) + \sum_{s'\in\mathcal{S}}P(s'\mid s,\bm{p}^s)V(s')\right\},
\end{equation}
for all $s\in\mathcal{S}$.

Each VIA iteration requires solving the optimization problem in~\eqref{eq:VIA} for every state $s\in\mathcal{S}$. However, the size of the state space, denoted by $|\mathcal{S}|$, can be theoretically infinite due to the unbounded number of clusters. While truncating the state space by limiting the number of clusters to $M_{max}$ reduces the size, the resulting state space remains large, with a size of
\begin{equation}
    |\mathcal{S}| = \sum_{N = 1}^{N_{max}}\sum_{m = 1}^{M_{max}}{N+m-1\choose m-1},
\end{equation}
where $N_{max}$ is the maximum number of active terminals. To mitigate the challenges posed by the large state space, we exploit the structural properties of $\mathcal{M}$ to formulate an equivalent MDP, which significantly reduces the computational complexity of solving the Bellman equation.

\subsection{A Low-Complexity Equivalent MDP}\label{sec:StructualProperty}
We start with introducing two structure properties of $\mathcal{M}$, which are essential for formulating the low-complexity equivalent MDP. To this end, we recall that clusters may become empty during the reservation process, and empty clusters do not affect the system evolution. This leads us to the first structure property. Specifically, we define an operator $\mathscr{R}$, where $\mathscr{R}(s)$ represents the state obtained by removing all clusters with zero terminals in state $s$. Then, we have the following lemma.
\begin{lemma}\label{lem:RemoveZeros}
    For $i\geq0$ and $s\in\mathcal{S}\cup\mathcal{T}$, $V_i(\mathscr{R}(s)) = V_i(s)$.
\end{lemma}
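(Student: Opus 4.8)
The plan is to prove the identity by induction on the iteration index $i$, exploiting the fact that an empty cluster is \emph{inert}: holding no terminal, it can never transmit, it contributes a factor of $1$ to every transition probability, and (by the induction hypothesis) it never alters the value of any state in which it appears. For the base case $i=0$, the initialization $V_0\equiv 0$ on $\mathcal{S}\cup\mathcal{T}$ makes the identity immediate. For $s\in\mathcal{T}$ it also holds at every $i$, since $V_i$ vanishes on target states by definition; I would therefore reduce to $s\in\mathcal{S}$, where $\mathscr{R}(s)\in\mathcal{S}$ retains the total count $s_N\geq 1$ and hence $C(\mathscr{R}(s))=C(s)=1$.

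For the inductive step, assume $V_i(\mathscr{R}(u))=V_i(u)$ for all $u$, and compare the updates~\eqref{eq:VIA} at $s$ and at $\tilde s\triangleq\mathscr{R}(s)$. The central device is to re-index each transition by its \emph{transmission vector} $(k_1,\dots,k_{s_M})$, where $k_i\in\{0,\dots,\eta_s^i\}$ is the number of terminals transmitting from cluster $i$. This gives a bijection between transmission vectors and successors $s'$ (with $\eta_{s'}^i=\eta_s^i-k_i$, plus an appended cluster of size $\sum_i k_i$ when $\sum_i k_i\geq 2$), matching exactly the three cases of~\eqref{eq:TransitionProbabilityConditions}, and the transition probability equals $\prod_i \binom{\eta_s^i}{k_i}p_i^{k_i}(1-p_i)^{\eta_s^i-k_i}$. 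On an empty cluster $k_i=0$ is forced and its factor equals $1$, so both the admissible transmission vectors and their probabilities for $s$ are in one-to-one, probability-preserving correspondence with those for $\tilde s$, under restriction of $\bm{p}^s$ to its non-empty-cluster components.

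It then remains to match the $V_i$-values of corresponding successors. I would verify that $\mathscr{R}(s')=\mathscr{R}(\tilde s')$ whenever $s'$ and $\tilde s'$ arise from corresponding transmission vectors: the non-empty clusters of $s$ preserve their relative order under $\mathscr{R}$, the empty clusters of $s$ remain empty in every successor and are discarded, and the newly created cluster (in the collision case) is appended last in both $s'$ and $\tilde s'$; hence the two reduced tuples coincide. The induction hypothesis gives $V_i(s')=V_i(\mathscr{R}(s'))=V_i(\mathscr{R}(\tilde s'))=V_i(\tilde s')$. Substituting into~\eqref{eq:VIA}, the expected cost-to-go at $s$ under $\bm{p}^s$ equals that at $\tilde s$ under the restricted action; since the probabilities assigned to empty clusters never enter, the two minimizations attain the same value, yielding $V_{i+1}(s)=V_{i+1}(\tilde s)$ and closing the induction.

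The main obstacle I anticipate is the successor correspondence, not the probability bookkeeping. The naive map $s'\mapsto\mathscr{R}(s')$ is \emph{not} injective on states: two successors that differ only in which originally non-empty clusters happen to empty out can reduce to the same tuple, so summing over states directly risks miscounting. Parametrizing by transmission vectors sidesteps this, because distinct vectors always yield distinct successors while empty clusters are pinned to $k_i=0$; the residual effort is then the purely combinatorial check that $\mathscr{R}$ commutes with the transition in the sense above.
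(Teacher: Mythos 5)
Your proof is correct and follows essentially the same route as the paper's: induction on the VIA iteration index, using the facts that the cost is invariant under $\mathscr{R}$ and that empty clusters contribute a factor of one to every transition probability, then invoking the induction hypothesis on the successors. Your transmission-vector parametrization merely makes explicit the successor correspondence that the paper's identity $P(\mathscr{R}(s')\mid\mathscr{R}(s),\mathscr{R}(\bm{p}^s)) = P(s'\mid s,\bm{p}^s)$ leaves implicit; it is a careful rendering of the same argument, not a different one.
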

\begin{proof}
    For $s\in\mathcal{T}$, the result holds trivially since $V_i(s)=0$ for any $i\geq0$ by definition. Now, for $s\in\mathcal{S}$, we use mathematical induction  to establish the result. Specifically, the base case $i = 0$ holds by initialization. Assume that the lemma is true for iteration $i$. We now show that it holds at iteration $i+1$. We first notice that $C(s) = C(\mathscr{R}(s))$. Therefore, it suffices to prove that
    \begin{equation}\label{eq:lemma1}
        P(\mathscr{R}(s')\mid\mathscr{R}(s),\mathscr{R}(\bm{p}^s)) = P(s'\mid s,\bm{p}^s),
    \end{equation}
    where $\mathscr{R}(\bm{p}^s)$ is the vector obtained by removing all the $p_i$ in $\bm{p}^s$ corresponding to zero-terminal clusters. To this end, we first notice that $P(\mathscr{R}(s')\mid\mathscr{R}(s),\mathscr{R}(\bm{p}^s))=0$ if $P(s'\mid s,\bm{p}^s)=0$. Then,  for the case where $P(s'\mid s,\bm{p}^s)\neq0$, we define $g_1(s)$ and $g_2(s)$ as the sets of clusters with and without terminals at state $s$, respectively. Since the number of terminals in each cluster is non-negative and non-increasing, we have
    \begin{equation}
        \prod_{i\in g_2(s)}{\eta_s^i\choose \eta_{s'}^i}(1-p_i)^{\eta^i_{s'}}p_i^{\eta_s^i - \eta_{s'}^i} = 1.
    \end{equation}
    Hence, we have
    \begin{equation}
        \begin{split}
            P(\mathscr{R}(s')\mid\mathscr{R}(s),\mathscr{R}(\bm{p}^s)) = & \prod_{i\in g_1(s)}{\eta_s^i\choose \eta_{s'}^i}(1-p_i)^{\eta^i_{s'}}p_i^{\eta_s^i - \eta_{s'}^i} \\
            =                                                            & P(s'\mid s,\bm{p}^s).
        \end{split}
    \end{equation}
    Combining this with the induction hypothesis at iteration $i$, we have
    \begin{equation}
        \begin{split}
            V_{i+1}(\mathscr{R}(s))  = & \min_{\bm{p}^s\in\mathcal{A}^s}\left\{C(s) + \sum_{s'\in\mathcal{S}}P(s'\mid s,\bm{p}^s)V_i(\mathscr{R}(s'))\right\} \\
            =                          & V_{i+1}(s),
        \end{split}
    \end{equation}
    which completes the proof.
\end{proof}
By Lemma~\ref{lem:RemoveZeros}, the VIA update of states with empty clusters can directly replicate the VIA update of the state obtained by removing zero-terminal clusters. Let $\mathcal{N}_1$ be the number of optimization problems that need to be solved per VIA iteration after applying Lemma~\ref{lem:RemoveZeros}. Then, we have
\begin{equation}\label{eq:UpdateReduction1}
    \mathcal{N}_1 = \sum_{N=1}^{N_{max}}\sum_{m=1}^N{N-1\choose m-1}.
\end{equation}
Since the interim value functions at every iteration of the VIA satisfy the property in Lemma~\ref{lem:RemoveZeros}, we can conclude that the value function of $\mathcal{M}$ possesses the same property.

Another structural property arises from the observation that the cluster order does not affect system evolution. Specifically, we define the operator $\mathscr{S}$, where $\mathscr{S}(s)$ represents the state obtained by swapping the position of any two clusters in state $s$. Then, the following lemma holds.
\begin{lemma}\label{lem:SwapPosition}
    For $i\geq0$ and $s\in\mathcal{S}\cup\mathcal{T}$, $V_i(\mathscr{S}(s)) = V_i(s)$.
\end{lemma}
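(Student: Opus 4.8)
The plan is to mirror the structure of the proof of Lemma~\ref{lem:RemoveZeros}, using mathematical induction on the iteration index $i$. The base case is immediate: for $s\in\mathcal{T}$ the claim holds trivially since $V_i(s)=0$ for all $i$ by definition, while for $s\in\mathcal{S}$ the case $i=0$ follows from the initialization $V_0(s)=0$. I would then assume $V_i(\mathscr{S}(s))=V_i(s)$ holds at iteration $i$ and establish it at iteration $i+1$.

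The core of the inductive step is to show that both the cost and the one-step transition structure are invariant under the swap operator. Since swapping two clusters leaves the total number of active terminals $s_N$ unchanged, we have $C(\mathscr{S}(s))=C(s)$. It then suffices to prove
\begin{equation}
    P(\mathscr{S}(s')\mid\mathscr{S}(s),\mathscr{S}(\bm{p}^s)) = P(s'\mid s,\bm{p}^s),
\end{equation}
where $\mathscr{S}(\bm{p}^s)$ denotes the action vector obtained by swapping the two probability components that correspond to the swapped clusters. This identity follows directly from the product form of the transition probability: swapping two clusters merely interchanges two factors in the product $\prod_{i=1}^{s_M}\binom{\eta_s^i}{\eta_{s'}^i}(1-p_i)^{\eta^i_{s'}}p_i^{\eta_s^i-\eta_{s'}^i}$, and since multiplication is commutative, the product is unchanged. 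Moreover, the indicator conditions in~\eqref{eq:TransitionProbabilityConditions} that determine when the probability is nonzero depend only on $s_N$, $s_M$, and the aggregate term $\sum_i(\eta_s^i-\eta_{s'}^i)$, all of which are invariant under relabeling clusters, so the zero/nonzero pattern is preserved as well.

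With cost and transition invariance in hand, I would complete the step by a re-indexing argument on the minimization. Because $\mathscr{S}$ is an involution and the feasible action set $\mathcal{A}^s=[0,1]^{s_M}$ is symmetric under permutation of coordinates, the map $\bm{p}\mapsto\mathscr{S}(\bm{p})$ is a bijection from $\mathcal{A}^{\mathscr{S}(s)}$ onto $\mathcal{A}^s$. Substituting the transition identity and the induction hypothesis $V_i(\mathscr{S}(s'))=V_i(s')$ into the Bellman update~\eqref{eq:VIA} for $\mathscr{S}(s)$, and reindexing the sum over successor states by $s'\mapsto\mathscr{S}(s')$, the minimization over $\bm{p}$ at $\mathscr{S}(s)$ becomes identical to the minimization over $\mathscr{S}(\bm{p})$ at $s$. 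Hence $V_{i+1}(\mathscr{S}(s))=V_{i+1}(s)$, which closes the induction.

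I expect this lemma to be easier than Lemma~\ref{lem:RemoveZeros}, since $\mathscr{S}$ only permutes cluster labels rather than removing components, so no special handling of empty-cluster factors is required. The only point demanding care is the bookkeeping of the correspondence between a transposition of clusters in the state and the matching transposition of components in both the action vector and the successor state; the main (though minor) obstacle is verifying that the successor-state relabeling $s'\mapsto\mathscr{S}(s')$ is consistent with the same cluster transposition applied to $s$, so that the bijection on actions and the reindexing of successor states are mutually compatible with the transition identity.
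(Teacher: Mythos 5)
Your proposal is correct and follows essentially the same route as the paper's proof: induction on the iteration index, invariance of the cost and of the transition probabilities (including the indicator conditions in~\eqref{eq:TransitionProbabilityConditions}) under the swap, and then the Bellman update. The paper states the re-indexing of the action set and successor-state sum only implicitly, whereas you spell it out; this is a matter of detail, not of approach.
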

\begin{proof}
    The proof follows a similar structure to the proof of Lemma~\ref{lem:RemoveZeros}. For $s\in\mathcal{T}$, the lemma holds since $V_i(s)=0$ for any $i\geq0$ by definition. For $s\in\mathcal{S}$, we use mathematical induction to prove the result. Since swapping clusters does not affect summation over clusters, the cost function remains invariant, i.e., $C(s) = C(\mathscr{S}(s))$. Similarly, the three conditions in~\eqref{eq:TransitionProbabilityConditions} remain unchanged under $\mathscr{S}$. Furthermore, products over clusters also remain invariant after swapping clusters, implying $P(\mathscr{S}(s')\mid\mathscr{S}(s),\mathscr{S}(\bm{p}^s)) =  P(s'\mid s,\bm{p}^s)$, where $\mathscr{S}(\bm{p}^s)$ is the vector obtained by swapping the elements corresponding to the clusters swapped by $\mathscr{S}$. Combining these with the induction hypothesis at iteration $i$, we have
    \begin{equation}
        \begin{split}
            V_{i+1}(\mathscr{S}(s)) = & \min_{\bm{p}^s\in\mathcal{A}^s}\left\{C(s) + \sum_{s'\in\mathcal{S}}P(s'\mid s,\bm{p}^s)V_i(\mathscr{S}(s'))\right\} \\
            =                         & V_{i+1}(s),
        \end{split}
    \end{equation}
    which concludes the proof.
\end{proof}
By Lemma~\ref{lem:SwapPosition}, the number of optimization problems that need to be solved per VIA iteration can be further reduced to
\begin{equation}\label{eq:UpdateReduction2}
    \mathcal{N}_2 = \sum_{N=1}^{N_{max}}\sum_{m=1}^Np(N,m),
\end{equation}
where the partition function $p(N,m)$ satisfies the recursive relation $p(N,m) = p(N-m,m)+p(N-1,m-1)$, with base cases $p(N,1) = p(N,N) = 1$ if $N\geq1$ and $p(N,m)=p(N,0)=p(0,m)=0$ if $m>N>0$. Since the interim value functions at every VIA iteration satisfy the property in Lemma~\ref{lem:SwapPosition}, the value function of $\mathcal{M}$ possesses the same property.

Building on the key structural properties of the value functions, we introduce an equivalent Goal-MDP, denoted by $\mathcal{M}^-$. The optimal policy for $\mathcal{M}$ can be directly derived from that for $\mathcal{M}^-$, while solving $\mathcal{M}^-$ is significantly less computationally demanding. To formulate $\mathcal{M}^-$, we first introduce $\tilde{G}^N_M$, where $M\leq N$, as the set of all possible distributions of $N$ identical terminals across $M$ identical clusters, with each cluster contains at least one terminal. Without loss of generality, the clusters are also ordered in ascending order based on the number of terminals they contain. Then, $\mathcal{M}^- = (\mathcal{S}^-,\mathcal{T}^-,\mathcal{A}^-,\mathcal{P}^-,\mathcal{C}^-)$ is defined as follows.
\begin{itemize}
    \item The state space $\mathcal{S}^-$ is analogous to that in $\mathcal{M}$, with $G^N_M$ replaced by $\tilde{G}^N_M$. It is important to note that $\mathcal{S}^-\subset\mathcal{S}$.
    \item The target state space $\mathcal{T}^- = \{s\mid s_N=0\}$.
    \item The feasible action at state $s\in\mathcal{S}^-$ satisfies $\bm{p}^{s} \in \mathcal{A}^{s}= \left\{\bm{p}\mid\bm{p}\in[0,1]^{s_M}\right\}$.
    \item The transition probability from $s\in\mathcal{S}^-$ to state $s'\in\mathcal{S}^-$ under action $\bm{p}^{s}$ is given by
          \begin{equation}\label{eq:ReducedStateTransition}
              P^-(s'\mid s,\bm{p}^s) = \sum_{\tilde{s}\in\mathcal{F}(s')}P(\tilde{s}\mid s,\bm{p}^s),
          \end{equation}
          where $P(\tilde{s}\mid s,\bm{p}^s)$ is as defined in Section~\ref{sec:POMDP}, and $\mathcal{F}(s') = \{\tilde{s}\mid \mathscr{F}(\tilde{s}) = s', \tilde{s}\in\mathcal{S}\}$. The operator $\mathscr{F}$ removes clusters with zero terminals and arranges the remaining clusters in ascending order according to the number of terminals.
    \item The cost function is defined as
          \begin{equation}
              C^-(s) = \begin{cases}
                  1 & s\in\mathcal{S}^-  \\
                  0 & s\in\mathcal{T}^-.
              \end{cases}
          \end{equation}
\end{itemize}
Let $V^-(s)$ and $\pi^{-,*}$ be the value function and optimal policy for $\mathcal{M}^-$, respectively. Then, we can establish a direct correspondence between $\mathcal{M}^-$ and $\mathcal{M}$.
\begin{theorem}\label{thm:OptimalPolicyReducedMDP}
    The value function $V(s)$ and optimal policy $\pi^*$ for $\mathcal{M}$ can be obtained by mapping states through the operator $\mathscr{F}$. Specifically, for each $s\in\mathcal{S}$, we have
    \begin{equation}\label{eq:ValueFunctionTransfer}
        V(s) = V^-(\mathscr{F}(s)),
    \end{equation}
    and the optimal policy $\pi^*$ is
    \begin{equation}\label{eq:OptimalPolicyTransfer}
        \pi^*(s) = \pi^{-,*}(\mathscr{F}(s)).
    \end{equation}
\end{theorem}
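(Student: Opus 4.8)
The plan is to establish the value-function identity~\eqref{eq:ValueFunctionTransfer} first and then read off the policy identity~\eqref{eq:OptimalPolicyTransfer} as an immediate consequence. The central observation is that $\mathscr{F}$ is nothing but the composition of the zero-removal operator $\mathscr{R}$ of Lemma~\ref{lem:RemoveZeros} with a finite sequence of the pairwise cluster swaps covered by $\mathscr{S}$ in Lemma~\ref{lem:SwapPosition} (the swaps that sort the non-empty clusters into ascending order). Since both lemmas hold for every interim value function $V_i$, and $\lim_{i\to\infty}V_i = V$, I would first conclude $V(s) = V(\mathscr{F}(s))$ for all $s\in\mathcal{S}$. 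This reduces the theorem to showing that $V$, restricted to the canonical states $\mathcal{S}^-\subset\mathcal{S}$, coincides with the value function $V^-$ of the reduced chain $\mathcal{M}^-$.

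To match the two value functions on $\mathcal{S}^-$, I would run VIA on both chains in parallel and prove by induction on the iteration index $i$ that $V_i^-(s) = V_i(s)$ for every $s\in\mathcal{S}^-$. The base case is immediate from the zero initialization together with $C^-(s)=C(s)$ on $\mathcal{S}^-$; note also that for $s\in\mathcal{S}^-$ all clusters are non-empty, so the action set $\mathcal{A}^s=[0,1]^{s_M}$ is literally the same for both chains. For the inductive step the crucial identity is the transition aggregation built into~\eqref{eq:ReducedStateTransition}: partitioning the original successors $\tilde{s}\in\mathcal{S}$ according to their canonical image $\mathscr{F}(\tilde{s})=s'$ gives
\begin{equation}
    \begin{split}
        \sum_{\tilde{s}\in\mathcal{S}}P(\tilde{s}\mid s,\bm{p}^s)V_i(\tilde{s})
        &= \sum_{s'\in\mathcal{S}^-}\biggl(\sum_{\tilde{s}\in\mathcal{F}(s')}P(\tilde{s}\mid s,\bm{p}^s)\biggr)V_i(s') \\
        &= \sum_{s'\in\mathcal{S}^-}P^-(s'\mid s,\bm{p}^s)V_i^-(s'),
    \end{split}
\end{equation}
where the first equality uses $V_i(\tilde{s})=V_i(s')$ for every $\tilde{s}\in\mathcal{F}(s')$ (from Lemmas~\ref{lem:RemoveZeros} and~\ref{lem:SwapPosition}), and the second uses the definition~\eqref{eq:ReducedStateTransition} together with the induction hypothesis. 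Taking the minimum over the common action set then yields $V_{i+1}^-(s)=V_{i+1}(s)$, and letting $i\to\infty$ establishes $V^-(s)=V(s)$ on $\mathcal{S}^-$, hence~\eqref{eq:ValueFunctionTransfer} after composing with the first step.

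With the value functions identified, the optimal-policy formula~\eqref{eq:OptimalPolicy} shows that the $\argmin$ defining $\pi^*(s)$ and that defining $\pi^{-,*}(\mathscr{F}(s))$ optimize the same one-step objective, which gives~\eqref{eq:OptimalPolicyTransfer}. The point that needs care — and what I expect to be the main obstacle — is the dimensional mismatch between the action at a general $s\in\mathcal{S}$ and the action at its representative $\mathscr{F}(s)$: the state $s$ may carry empty clusters and an arbitrary cluster ordering, so its action vector is longer and permuted relative to that of $\mathscr{F}(s)$. I would resolve this through the same two invariances: by Lemma~\ref{lem:RemoveZeros} any probability assigned to an empty cluster leaves the transition kernel unchanged, since no terminal can transmit from it, and by Lemma~\ref{lem:SwapPosition} a reordering of clusters merely permutes the corresponding action entries without affecting the backup. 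Consequently $\pi^{-,*}(\mathscr{F}(s))$ is lifted back to $s$ by inserting arbitrary probabilities at the empty-cluster positions and permuting the non-empty entries to match the ordering of $s$; every such lift attains the optimum, so~\eqref{eq:OptimalPolicyTransfer} holds with the understanding that the action is read off the non-empty clusters of $s$.
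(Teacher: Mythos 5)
Your proposal is correct and follows the same basic route as the paper: decompose $\mathscr{F}$ into the zero-removal operator $\mathscr{R}$ and a sorting sequence of swaps $\mathscr{S}$, and invoke Lemmas~\ref{lem:RemoveZeros} and~\ref{lem:SwapPosition} at every VIA iteration. The paper's own proof is essentially two sentences long and stops there, declaring that~\eqref{eq:ValueFunctionTransfer} ``follows directly'' from the two lemmas. What you add --- and what the paper leaves implicit --- is the identification of $V$ restricted to $\mathcal{S}^-$ with $V^-$, the value function of the \emph{different} chain $\mathcal{M}^-$ whose kernel $P^-$ aggregates successors via~\eqref{eq:ReducedStateTransition}. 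The two lemmas only assert invariance of $V_i$ \emph{within} $\mathcal{M}$; they do not by themselves say that the canonical restriction satisfies the Bellman equation of $\mathcal{M}^-$. Your parallel induction, which partitions the successors $\tilde{s}$ of $s\in\mathcal{S}^-$ by their canonical image $\mathscr{F}(\tilde{s})=s'$, replaces $V_i(\tilde{s})$ by $V_i(s')$ using the lemmas, and then recognizes the inner sum as $P^-(s'\mid s,\bm{p}^s)$, is exactly the missing link, and it is sound (the family $\{\mathcal{F}(s')\}_{s'\in\mathcal{S}^-}$ does partition $\mathcal{S}$, and targets map to targets with value zero on both sides). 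Your closing remark on the dimensional mismatch of the lifted action --- that probabilities on empty clusters are immaterial and reorderings only permute action entries --- likewise makes precise the abuse of notation in~\eqref{eq:OptimalPolicyTransfer} that the paper glosses over. In short, same approach, but your write-up is the more complete one.
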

\begin{proof}
    By definition, the operator $\mathscr{F}$ maps states in $\mathcal{S}$ to their corresponding states in $\mathcal{S}^-$ by removing empty clusters and reordering. Thus, equation~\eqref{eq:ValueFunctionTransfer} follows directly from Lemma~\ref{lem:RemoveZeros} and Lemma~\ref{lem:SwapPosition}. Since the optimal policy is derived from the value function, equation~\eqref{eq:OptimalPolicyTransfer} follows immediately.
\end{proof}
As indicated by~\eqref{eq:UpdateReduction1} and~\eqref{eq:UpdateReduction2}, VIA for $\mathcal{M}^-$ requires solving significantly fewer optimization problems per iteration than for $\mathcal{M}$. With the established results, we can now introduce the pre-training strategy for RTDP-Bel using the value function of $\mathcal{M}$.

\subsection{Genie-Aided Value Iteration for Pre-Training}
The value function $V(s)$ of $\mathcal{M}$ can be viewed as a pre-trained $\tilde{V}(b)$ and serves as an informative initialization to $\tilde{V}(b)$. Specifically, when a belief state $b$ is visited for the first time, $\tilde{V}(b)$ is initialized as
\begin{equation}\label{eq:BelifStateValueInitialization}
    \tilde{V}(b) = \sum_{s\in\mathcal{S}}b(s)V(s),
\end{equation}
As demonstrated in~\cite{littman1995}, informative initialization provides good performance after only a few trials, in contrast to zero initialization. According to Theorem~\ref{thm:OptimalPolicyReducedMDP}, obtaining $V(s)$ reduces to computing $V^-(s)$, which can be calculated using VIA with a modified update rule given by
\begin{equation}\label{eq:CompactOptimization1}
    V^-_{i+1}(s) = \min_{\bm{p}^s\in\mathcal{A}^s}\left\{C^-(s) + \sum_{s'\in\mathcal{S}^-}P^-(s'\mid s,\bm{p}^s)V^-_i(s')\right\},
\end{equation}
where $V^-_i(s)$ is the interim value function of $\mathcal{M}^-$ at $i$-th VIA iteration and $V^-_i(s)=0$ for $s\in\mathcal{T}^-$ and all $i$. The convergence of the VIA in this paper is defined as the point where the maximum difference between consecutive iterations is less than a predetermined threshold $\epsilon$. Upon convergence, the optimal policy $\pi^{-,*}$ is given by
\begin{equation}\label{eq:OptimalPolicyFromVF}
    \pi^{-,*}(s) = \argmin_{\bm{p}^s\in\mathcal{A}^s}\left\{C^-(s) + \sum_{s'\in\mathcal{S}^-}P^-(s'\mid s,\bm{p})V^-(s')\right\},
\end{equation}
for $s\in\mathcal{S}^-$. Then, the value function and optimal policy for $\mathcal{M}$ can be obtained by applying Theorem~\ref{thm:OptimalPolicyReducedMDP}.

To further reduce the computational complexity, we observe that in certain states, the value function and optimal policy for $\mathcal{M}^-$ can be determined analytically. To this end, we define $N_c(s)$ as the set of clusters containing exactly one terminal in state $s$ and introduce $\mathcal{G}$ as the set of state $s\in\mathcal{S}^-$ satisfying $|N_c(s)|\geq s_N$. Then, we have the following theorem.
\begin{theorem}\label{lem:SpecialState}
    For any state $s\in\mathcal{G}$, the policy $\pi$, which selects a single cluster $k\in N_c(s)$ to transmit a reservation packet with probability one while assigning zero probability to all other clusters, is optimal. Formally, the optimal action at $s\in\mathcal{G}$ is given by
    \begin{equation}
        \bm{p}^{s,*}_i = \bm{p}^{s,\pi}_i = \mathbbm{1}\{i=k\},
    \end{equation}
    and the value function is
    \begin{equation}
        V^-(s) = V^{-,\pi}(s) = s_N,
    \end{equation}
    where $V^{-,\pi}(s)$ is the value function resulting from applying policy $\pi$ at state $s$.
\end{theorem}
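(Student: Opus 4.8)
The plan is to prove both claims — that $V^-(s)=s_N$ and that the single-cluster policy is optimal — by sandwiching the value function between a universal lower bound and the cost incurred by the proposed policy. Before doing so, I would first pin down the structure of $\mathcal{G}$. Since every cluster in $\mathcal{S}^-$ contains at least one terminal, each singleton contributes exactly one terminal while every larger cluster contributes at least two, so $s_N \geq |N_c(s)|$ holds for every $s\in\mathcal{S}^-$. Combined with the defining inequality $|N_c(s)|\geq s_N$ of $\mathcal{G}$, this forces $|N_c(s)|=s_N$, meaning every cluster is a singleton and $s_M=s_N$. Thus $\mathcal{G}$ is exactly the set of ``all-singleton'' states, which is the key structural fact underpinning the argument.

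Next I would establish the lower bound $V^-(s)\geq s_N$ for every $s\in\mathcal{S}^-$ (not just $\mathcal{G}$). The observation that each time slot removes at most one active terminal is central: inspecting the observation probabilities, only the success feedback $c=1$ reduces $s_N$, and it does so by exactly one, whereas the idle feedback $c=0$ leaves the state unchanged and the collision feedback $c=e$ leaves $s_N$ unchanged (it only increments $s_M$). Because $\mathscr{F}$ preserves $s_N$, the same holds in $\mathcal{M}^-$. Hence any trajectory from $s$ to a target state in $\mathcal{T}^-$ (where $s_N=0$) contains at least $s_N$ non-target slots almost surely, and since each such slot costs one, every policy accrues a cost of at least $s_N$; therefore $V^-(s)\geq s_N$.

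Then I would show the proposed policy $\pi$ attains this bound. Under $\pi$ at $s\in\mathcal{G}$, only the single terminal occupying cluster $k\in N_c(s)$ transmits (with probability one) and all remaining singletons stay silent, so exactly one reservation packet is sent and the feedback is deterministically $c=1$. That terminal wins, its cluster becomes empty and is removed by $\mathscr{F}$, leaving a state $s'$ of $s_N-1$ singletons that again lies in $\mathcal{G}$. A straightforward induction on $s_N$ — with base case $s_N=0$ being a target state of value $0$ — then yields $V^{-,\pi}(s)=1+V^{-,\pi}(s')=1+(s_N-1)=s_N$. Combining the two bounds gives $V^-(s)\leq V^{-,\pi}(s)=s_N\leq V^-(s)$, so $V^-(s)=s_N$ and $\pi$ is optimal, as claimed.

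I expect the lower bound to be the crux: the achievability is a transparent deterministic recursion, but optimality hinges on arguing that no (possibly randomized) policy can beat one departure per slot. Intuitively, any action that lets two or more singletons transmit simultaneously produces a collision ($c=e$) and wastes the slot without reducing $s_N$, so attempting a single cluster is the only way to guarantee progress; the ``at most one departure per slot'' conservation law makes this precise and rules out any cleverer randomized alternative. If a VIA-based presentation is preferred to match the proof of Theorem~\ref{thm:OptimalActionBeliefMDP}, the same conclusion follows from an induction on the iteration index showing $V^-_i(s)\geq\min\{i,s_N\}$ together with the exact recursion achieved by $\pi$, but the direct sandwiching argument above is cleaner.
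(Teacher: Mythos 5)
Your proof is correct and follows essentially the same route as the paper: the lower bound $V^-(s)\geq s_N$ from the fact that $s_N$ can decrease by at most one per transition, combined with achievability via the deterministic recursion $V^{-,\pi}(s)=1+V^{-,\pi}(s')$ under induction. Your explicit observation that the condition $|N_c(s)|\geq s_N$ together with the no-empty-cluster property of $\mathcal{S}^-$ forces every cluster in $\mathcal{G}$ to be a singleton is a nice clarification that the paper leaves implicit, but it does not change the substance of the argument.
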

\begin{proof}
    According to the state transition probabilities detailed in~\eqref{eq:ReducedStateTransition}, the maximum step size by which $s_N$ can decrease in a single transition is one. Since the target state satisfies $s_N=0$, and $V^-(s)$ represents the minimum expected number of transitions required to reach the target state from $s$, it follows that
    \begin{equation}\label{eq:theorem2Result1}
        V^-(s)\geq s_N,
    \end{equation}
    for any $s\in\mathcal{S}^-$. Therefore, to establish the optimality of policy $\pi$, it suffices to demonstrate that the value function resulting from applying $\pi$ at state $s$ satisfies $V^{-,\pi}(s)=s_N$. To this end, we first notice that $s_N$ and $|N_c(s)|$ both decrease by one after the application of policy $\pi$ at state $s\in\mathcal{G}$. Hence, the condition $|N_c(s')|\geq s'_N$ holds at all intermediate state $s'$ along the path starting from state $s\in\mathcal{G}$ under policy $\pi$. Moreover, the state transitions along the path is deterministic since the action under policy $\pi$ is deterministic.

    With these in mind, we prove the theorem by induction, starting with the base case where $|N_c(s)|\geq s_N=1$. Under policy $\pi$, the system reaches the target state after a single state transition. Hence, we have
    \begin{equation}
        V^{-,\pi}(s) = 1 = s_N,
    \end{equation}
    for any state $s\in\mathcal{G}$ such that $|N_c(s)|\geq s_N=1$. Assume as the induction hypothesis that $V^{-,\pi}(s) = s_N=a$ holds for any state $s\in\mathcal{G}$ such that $|N_c(s)|\geq s_N=a$. We now prove that the hypothesis holds for $s\in\mathcal{G}$ with $|N_c(s)|\geq s_N=a+1$. To this end, we recall that, under policy $\pi$, the path starting from $s\in\mathcal{G}$ is deterministic. Hence, we have
    \begin{equation}
        V^{-,\pi}(s) = 1 + V^{-,\pi}(s'),
    \end{equation}
    where $s'$ satisfies $|N_c(s')| = |N_c(s)| - 1 $ and $s'_N = s_N - 1=a$. Then, we know $s'\in\mathcal{G}$ and $V^{-,\pi}(s') = a$ by the induction hypothesis. Hence, we have
    \begin{equation}
        V^{-,\pi}(s) =  1 + a = s_N
    \end{equation}
    where $s\in\mathcal{G}$ and satisfies $|N_c(s)|\geq s_N=a+1$. Then, we can conclude that the induction hypothesis holds for all $a$, which establishes the optimality of policy $\pi$ with $V^-(s) = V^{-,\pi}(s) = s_N$, completing the proof.
\end{proof}
\begin{algorithm}[!t]
    \begin{algorithmic}[1]
        \Procedure{Modified VIA}{$\mathcal{M}^-, \mathcal{M},\epsilon$}
        \State Initialize $i = 0$ and $V^-_0(s)=0$ for $s\in\mathcal{S}^-$.
        \State Set $V^-(s) = s_N$ for $s\in\mathcal{G}$.
        \Repeat
        \For{$s\in\mathcal{S}^-\setminus\mathcal{G}$}
        \State $V^-_{i+1}(s)\leftarrow$ solution to~\eqref{eq:CompactOptimization1} and~\eqref{eq:CompactOptimization2}.
        \EndFor
        \State $i = i + 1$.
        \Until{$\max_s\{|V^-_i(s) - V^-_{i-1}(s)|\}\leq\epsilon$}
        \For{$s\in\mathcal{G}$}
        \State $\pi^{-,*}(s)\leftarrow$ according to Theorem~\ref{lem:SpecialState}.
        \EndFor
        \For{$s\in\mathcal{S}^-\setminus\mathcal{G}$}
        \State $\pi^{-,*}(s)\leftarrow$ using~\eqref{eq:OptimalPolicyFromVF}.
        \EndFor
        \For{$s\in\mathcal{S}$}
        \State $(V(s),\pi^*(s))\leftarrow$ according to Theorem~\ref{thm:OptimalPolicyReducedMDP}.
        \EndFor
        \State \Return $\pi^*$ and $V(s)$ for $s\in\mathcal{S}$.
        \EndProcedure
    \end{algorithmic}
    \caption{Genie-Aided Value Iteration}
    \label{Alg:ModifiedVIA}
\end{algorithm}
\noindent With Theorem~\ref{lem:SpecialState}, when we apply the update in~\eqref{eq:CompactOptimization1} to obtain the value function $V^-(s)$ for $s\in\mathcal{S}^-$, the update for state $s\in\mathcal{G}$ can be avoided, and the interim value function satisfies
\begin{equation}\label{eq:CompactOptimization2}
    V^-_i(s) = s_N,
\end{equation}
for $s\in\mathcal{G}$ at any iteration $i$ of VIA. Combing the established results, the algorithm for computing the value function and the optimal policy for $\mathcal{M}$ is summarized in Algorithm~\ref{Alg:ModifiedVIA}. Once the value function $V(s)$ is obtained, the initialization of $\tilde{V}(b)$ follows~\eqref{eq:BelifStateValueInitialization}.

\section{Simulation Results}\label{sec:NumericalResults}
We consider a system where each channel reservation involves a maximum of $N_{max}=5$ terminals. Data packets arrive according to a Poisson process with arrival rate $\lambda$, and each packet is randomly assigned to one of the terminals with equal probability upon arrival. Time is divided into frames, each containing $T$ time slots. Let $\Lambda^T$ denote the number of packets arriving at one of the terminals within a frame of size $T$. Then, we have
\begin{equation}
    Pr(\Lambda^T = k) = \frac{(\lambda' T)^ke^{-\lambda' T}}{k!},
\end{equation}
where $\lambda' \triangleq \frac{\lambda}{N_{max}}$. Let $\Gamma^T$ represent the number of terminals receiving at least one packet during a frame of size $T$. Then, we can obtain
\begin{equation}
    Pr(\Gamma^T = k) = {N_{max}\choose k}\left(1-e^{-\lambda' T}\right)^ke^{-\lambda' T(N_{max}-k)},
\end{equation}
where $k\leq N_{max}$. The initial belief state $b_0$, as required in Algorithm~\ref{Alg:RTDP-BelSingleTrial}, is then given by $b_0(s) = P(\Gamma^T = k)$ where $s = G^{k}_{1}$. Throughout our simulations, the ACK/NACK feedback, given the number of simultaneously attempting terminals, is the same as that assumed in the IEEE 802.11 DCF and ALOHA. More specifically, an ACK is broadcast only when a single terminal attempts transmission, and a NACK is broadcast if more than one terminal attempts transmission. Then, the operation of the system under the two frame division strategies is detailed as follows.
\paragraph{Fixed frame} Channel reservation takes place at the start of each frame. If there is ongoing packet transmission at the beginning of the frame, it is paused to allow the channel reservation to proceed. Once the channel reservation concludes, packet transmission resumes. Note that all previously queued packets must be transmitted before the arrivals within the current frame. This is achieved by having the terminal send a finish signal once it has finished transmitting, and the corresponding channel feedback is broadcast to inform all terminals of the transmission status. By listening to the channel feedbacks, terminals can determine the number of pending transmissions in the queue and coordinate their transmission.

\paragraph{Dynamic frame} A new frame begins only after all terminals with backlogged data have completed their transmissions. This is made possible by leveraging the finish signal described earlier. This strategy will be the default setting for the subsequent simulations.

We simulate the system for $20000$ time slots. Each reservation packet, along with the corresponding feedback, has a fixed size of 30 bytes and carries no data payload. The finish signal and the corresponding feedback also have a fixed size of 30 bytes. The size of a data packet can vary. In the simulation, the default size is $180$ bytes, which gives a data-to-overhead ratio $\rho \triangleq \frac{180}{2\times30} = 3$. To maintain network stability, the data arrival rate is constrained by $\lambda\rho\leq 1$. For RTDP-Bel in Algorithm~\ref{Alg:RTDP-BelSingleTrial}, we take the following default values and settings.
\begin{itemize}
    \item The quantization parameter for $\mathcal{H}$ is $q = 10$.
    \item The discretization parameter in~\eqref{eq:DecisionVariableDiscrete} is $d=10$.
    \item The number of clusters is capped at a maximum of $15$. If a collision occurs and the number of clusters has reached this threshold, the colliding terminals remain in the same cluster rather than forming a new one.
    \item To reduce computational demand, no more than two clusters of terminals are allowed to transmit the reservation packet in each time slot.
\end{itemize}

The performance of the proposed reservation protocol is evaluated using two key metrics. The first metric is the effective throughput, denoted by $\gamma$. The effective throughput is defined as the average number of successfully transmitted data packets per time slot. The second metric is the average delay, denoted by $\tau$. The delay of a data packet is defined as the number of time slots elapsed from its generation to the completion of its transmission.

We start with comparing the proposed reservation protocol against the following benchmark protocols.
\begin{itemize}
    \item \textbf{Slotted ALOHA}: The Slotted ALOHA adopts a binary exponential backoff (BEB) mechanism with a maximum window size of $W_{max}=1024$. Specifically, after the $k$-th collision, the terminal selects a random backoff time within the range $[0,\min\{2^k,W_{max}\}-1]$.
    \item \textbf{Tree (Stack) algorithm}: The Tree algorithm follows the non-blocked stack algorithm $(N,N,U,2)$ defined in~\cite{tsybakov1985survey}. In this protocol, terminals retransmit immediately after a collision with probability $\frac{1}{2}$.
    \item \textbf{CSMA/CA}: The CSMA/CA adopts the RTS/CTS mechanism. A terminal first transmits an RTS packet before transmitting data packets and proceeds with transmission only if a CTS packet is received. The CTS packet is sent by the receiver only if a single terminal has sent an RTS packet. The RTS/CTS packets have a fixed size of 30 bytes. Additionally, CSMA/CA incorporates the BEB mechanism with $W_{max}=1024$. Specifically, after the $k$-th collision, the terminal picks a random backoff time within the range $[0,\min\{2^{k+2},W_{max}\}]$.
\end{itemize}

\begin{figure}[t]
    \centering
    \includegraphics[width=\columnwidth]{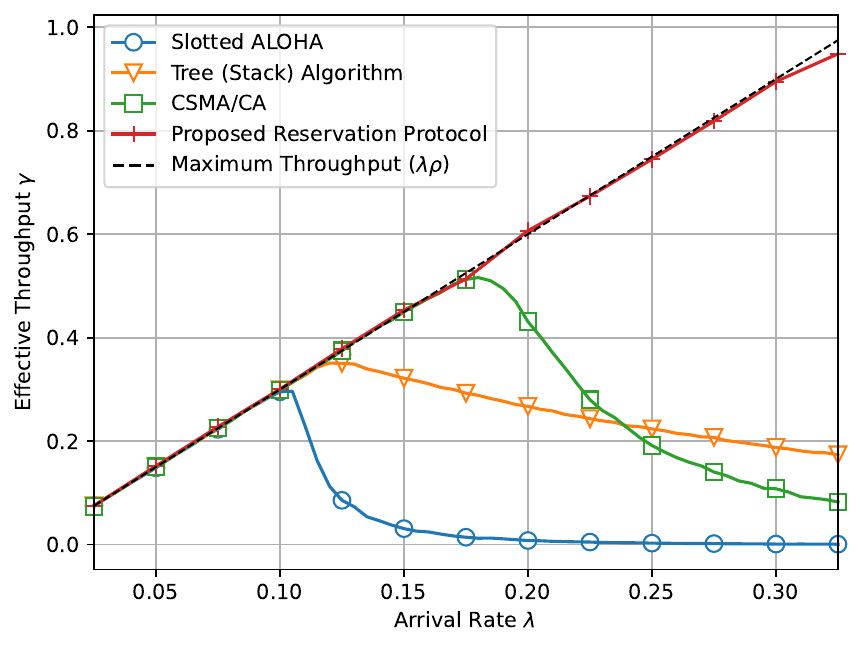}
    \caption{Comparison of the proposed reservation protocol with benchmark protocols. Some data points exceed the maximum effective throughput due to stochastic data arrival.}
    \label{fig:Benchmarks}
    \vspace{-1.5em}
\end{figure}

Fig.~\ref{fig:Benchmarks} illustrates the effective throughput achieved by both the benchmark protocols and the proposed reservation protocol with the dynamic frame strategy. As observed, the proposed reservation protocol outperforms the benchmark protocols, particularly under heavy traffic conditions. When traffic is light, all protocols efficiently deliver data packets, reaching the maximum effective throughput. However, as traffic intensity increases, the benchmark protocols suffer from excessive collisions and inefficient collision resolution mechanisms. Among the benchmark protocols, Slotted ALOHA exhibits the worst performance due to its simplistic retransmission strategy. The Tree (Stack) algorithm, which employs a more structured retransmission approach, and CSMA/CA, which utilizes RTS/CTS packets to manage large packet transmissions, perform comparatively better. Additionally, we notice that, for arrival rates in the range $\lambda\in[0.125,0.225]$, CSMA/CA outperforms the Tree (Stack) algorithm, but for $\lambda\geq0.225$, the trend reverses. This occurs because CSMA/CA relies on an ALOHA-like mechanism to manage RTS/CTS packet transmissions. As traffic increases, the inefficiencies associated with frequent RTS/CTS contention offset the benefits of the shorter RTS/CTS packets, leading to performance degradation. In contrast, the proposed reservation protocol, leveraging the POMDP framework and frame-based transmission scheme, achieves the best performance.

\begin{figure*}[!t]
    \centering
    \begin{subfigure}[b]{0.24\textwidth}
        \includegraphics[width=\textwidth]{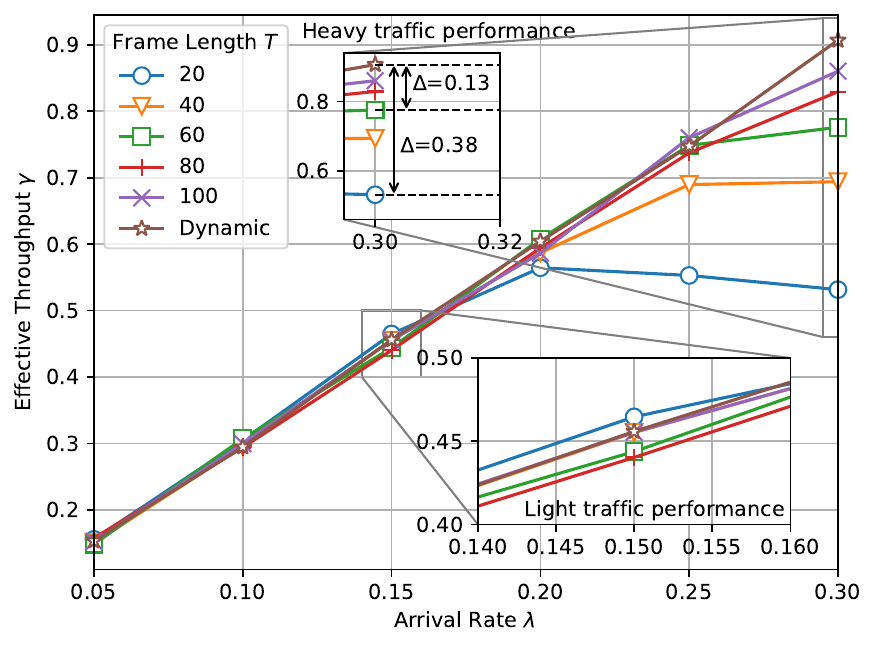}
        \vspace{-0.6cm}
        \caption{Effective throughput $\gamma$ under various frame division strategies.}
        \label{fig:FrameThroughput}
    \end{subfigure}
    \hfill
    \begin{subfigure}[b]{0.24\textwidth}
        \includegraphics[width=\textwidth]{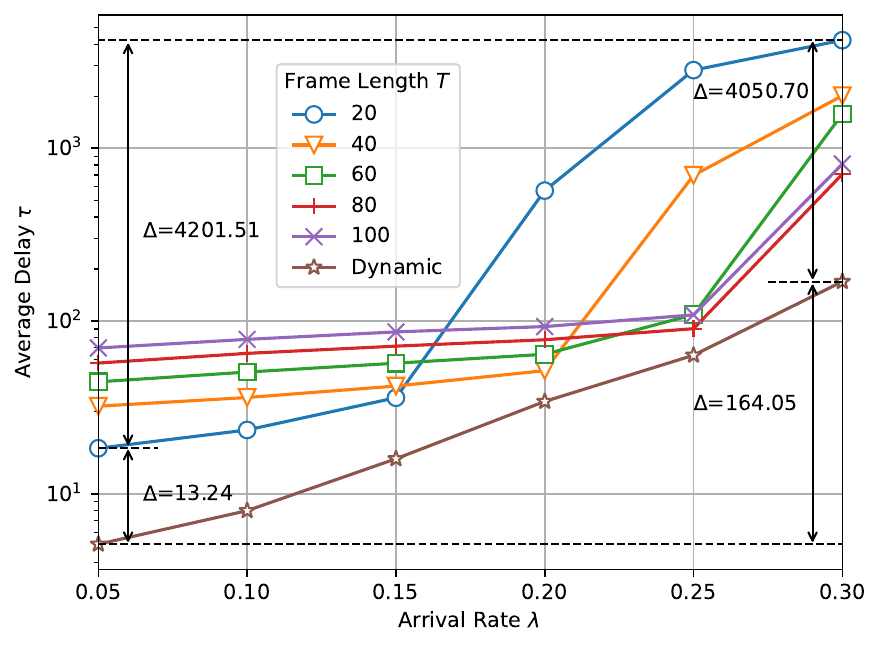}
        \vspace{-0.6cm}
        \caption{Average delay $\tau$ under various frame division strategies.}
        \label{fig:FrameDelay}
    \end{subfigure}
    \hfill
    \begin{subfigure}[b]{0.24\textwidth}
        \vspace{0.2cm}
        \includegraphics[width=\textwidth]{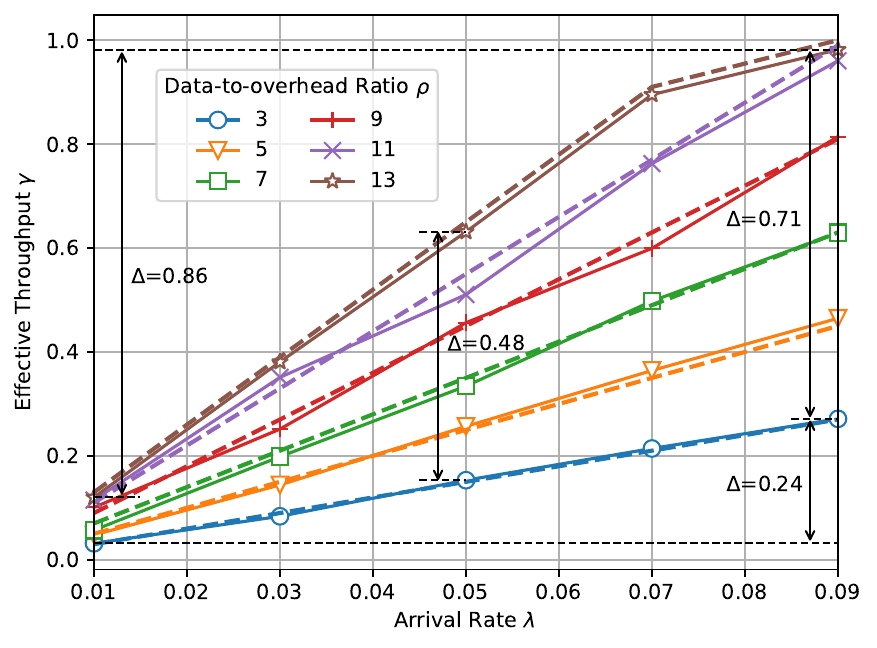}
        \vspace{-0.6cm}
        \caption{Effective throughput $\gamma$ when the data-to-overhead ratio varies.}
        \label{fig:RatioThroughput}
    \end{subfigure}
    \hfill
    \begin{subfigure}[b]{0.24\textwidth}
        \vspace{0.2cm}
        \includegraphics[width=\textwidth]{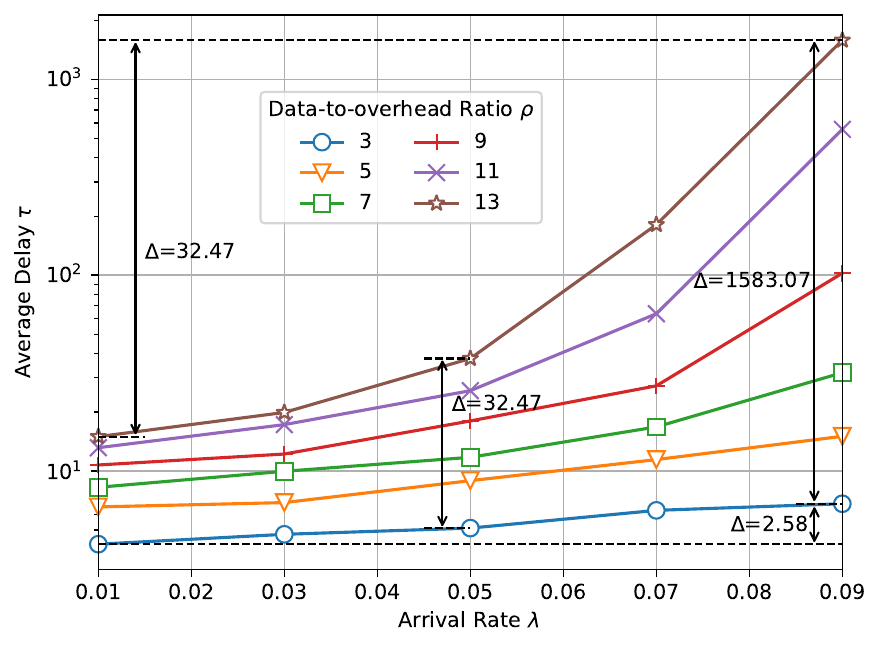}
        \vspace{-0.6cm}
        \caption{Average delay $\tau$ when the data-to-overhead ratio varies.}
        \label{fig:RatioDelay}
    \end{subfigure}
    \caption{Performance of the proposed reservation protocol under various system settings. The dotted lines represent the maximum effective throughput (i.e., $\lambda\rho$), and some data points may exceed the maximum due to stochastic data arrival.}
    \label{fig:Performance}
    \vspace{-1em}
\end{figure*}

\begin{figure*}[!t]
    \centering
    \begin{subfigure}[b]{0.24\textwidth}
        \includegraphics[width=\textwidth]{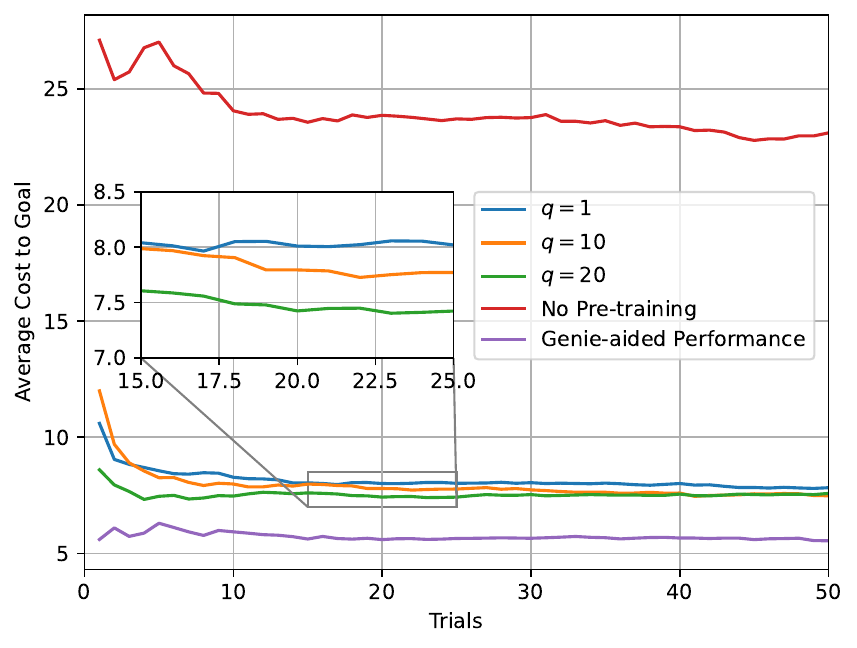}
        \vspace{-0.6cm}
        \caption{Short term behavior of RTDP-Bel under various $q$.}
        \label{fig:qShortterm}
    \end{subfigure}
    \hfill
    \begin{subfigure}[b]{0.24\textwidth}
        \includegraphics[width=\textwidth]{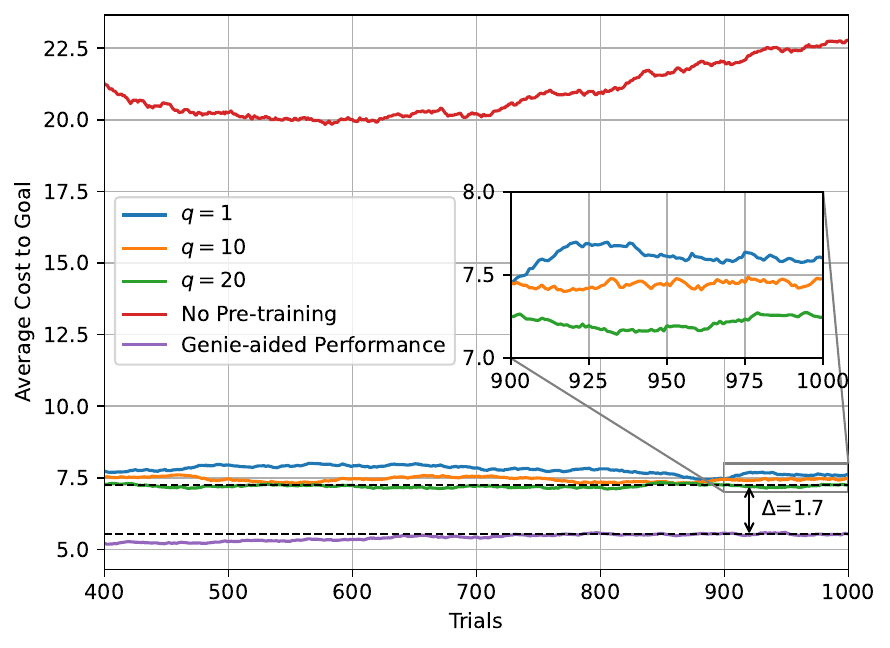}
        \vspace{-0.6cm}
        \caption{Long term behavior of RTDP-Bel under various $q$.}
        \label{fig:qLongterm}
    \end{subfigure}
    \hfill
    \begin{subfigure}[b]{0.24\textwidth}
        \vspace{0.2cm}
        \includegraphics[width=\textwidth]{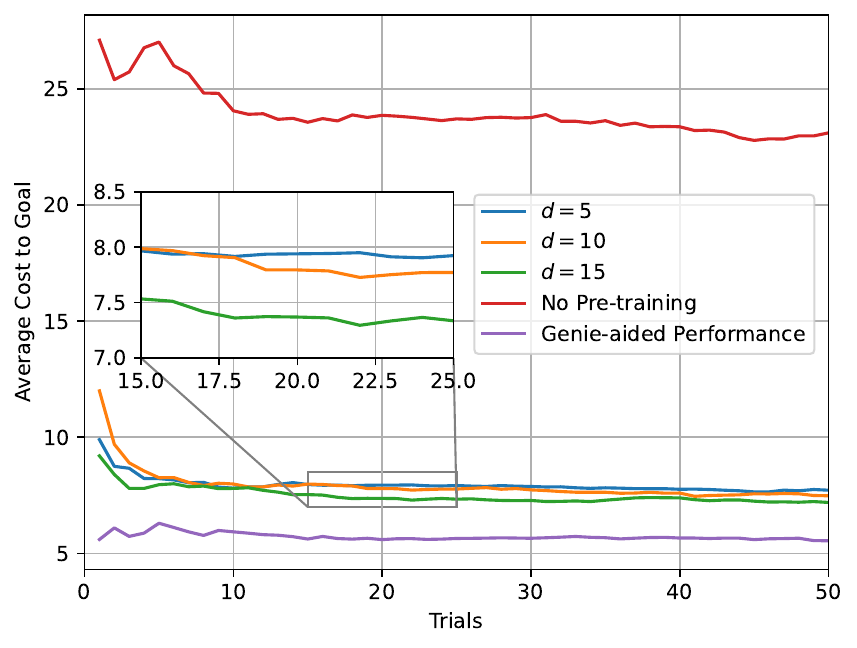}
        \vspace{-0.6cm}
        \caption{Short term behavior of RTDP-Bel under various $d$.}
        \label{fig:dShortterm}
    \end{subfigure}
    \hfill
    \begin{subfigure}[b]{0.24\textwidth}
        \vspace{0.2cm}
        \includegraphics[width=\textwidth]{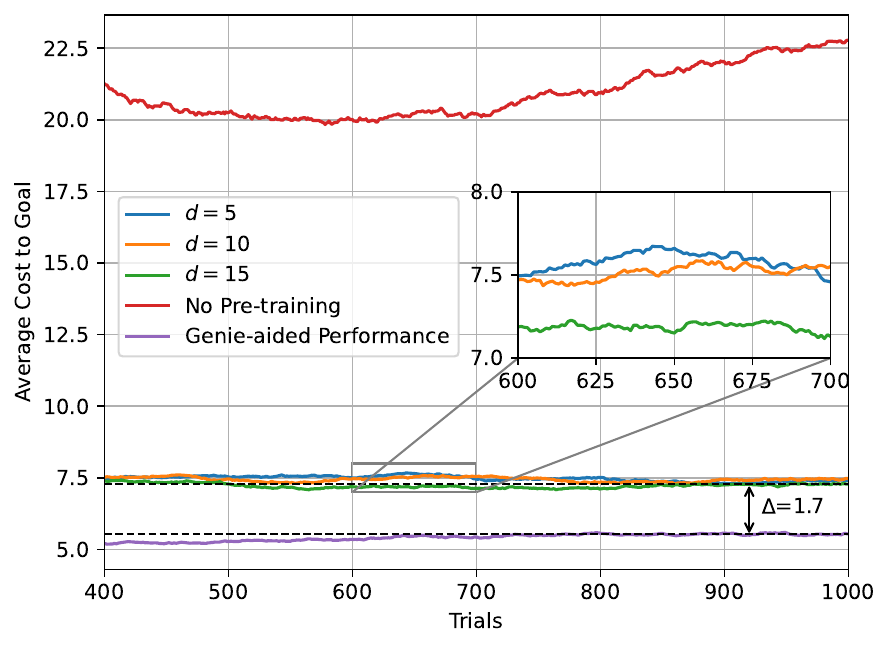}
        \vspace{-0.6cm}
        \caption{Long term behavior of RTDP-Bel under various $d$.}
        \label{fig:dLongterm}
    \end{subfigure}
    \caption{Impact of RTDP-Bel's parameters on its performance, where the short-term behavior is obtained by averaging the moving average with a window size of $40$ trials over $10$ runs and the long-term behavior is represented by a moving average with a window size of $400$ trials from a single run. The "No Pre-training" behavior corresponds to initializing $\tilde{V}(b)$ as zero, while the "Genie-aided Performance" behavior represents the performance of the optimal policy for genie-aided MDP.}
    \label{fig:Parameter}
    \vspace{-1.5em}
\end{figure*}

In the sequel, we study the performance of the proposed reservation protocol under various system settings, as visualized in Fig.~\ref{fig:Performance}. Fig.~\ref{fig:FrameThroughput} shows that under light traffic conditions, all frame division strategies achieve similar performance. However, as traffic intensity increases, performance degrades when the frame length is fixed. For short frames, the degradation is primarily due to the frequent reservation process, which consumes excessive channel resources. For long frames, the performance degradation primarily results from inefficient channel utilization, as the number of idle slots increases. The dynamic frame division strategy, on the other hand, performs better under heavy traffic, as reservations are initiated immediately after data transmissions complete. The average delay under different frame division strategies is shown in Fig.~\ref{fig:FrameDelay}. As we can see, for fixed frames, the delay dramatically increases when the arrival rate exceeds a certain threshold. This occurs because overly frequent reservations in a fixed frame lead to severe data backlog. In contrast, for the dynamic frame, the delay remains lower and stable, as data transmission proceeds uninterrupted by the reservation process. It is also worth noting that the proposed reservation protocol ensures FIFO service. This is in contrast to the benchmark protocols considered in Fig.~\ref{fig:Benchmarks}, where the delivery order is not guaranteed. The effective throughput under different $\rho$ is shown in Fig.~\ref{fig:RatioThroughput}. As shown, the proposed reservation protocol achieves effective throughput very close to the upper limit. Note that some data points exceed the upper limit due to the stochastic data arrival. The average delay when $\rho$ varies is shown in Fig.~\ref{fig:RatioDelay}. As expected, the delay increases with $\rho$. This is because larger data packets require longer transmission times. Additionally, for a fixed $\rho$, an increase in the arrival rate leads to longer waiting times due to the FIFO service provided by the proposed reservation protocol.

Finally, we investigate the impact of RTDP-Bel's parameters on its performance. The results are shown in Fig.~\ref{fig:Parameter}. In these simulations, we run Algorithm~\ref{Alg:RTDP-BelSingleTrial} for multiple trials and omit stochastic data arrival. Hence, we omit the data transmission process and fix the initial belief state as $b_0=[0.1,0.1,0.3,0.3,0.2]$. The first parameter we investigate is the quantization parameter $q$. From Fig.~\ref{fig:qShortterm}, we can see that a small $q$ leads to higher cost due to the merging of distinct states with different optimal actions. However, increasing $q$ from $1$ to $20$ enlarges the size of the hash table from $1194$ to $3751$ entries, but the performance improvement is marginal. A similar trend is also observed in Fig.~\ref{fig:qLongterm}, which presents the long-term behavior. Next, we examine the effect of the discretization parameter $d$. The short-term and long-term behavior are shown in Fig.~\ref{fig:dShortterm} and Fig.~\ref{fig:dLongterm}, respectively. We observe that a small $d$ leads to performance degradation, while increasing $d$ improves performance. However, the performance gain is relatively small compared to the significant increase in computational complexity. For instance, increasing $d$ from $5$ to $15$ results in an eightfold increase in the size of the decision variable. Fig.~\ref{fig:Parameter} also highlights the benefits of pre-training. Specifically, the "No Pre-training" case exhibits high costs, whereas the proposed reservation protocol with pre-training exhibits reduced costs. Additionally, in long-term behavior, the proposed reservation protocol with $d=15$ achieves an average cost of around $7.1$, compared to $5.4$ for the optimal genie-aided policy. This performance gap arises from the lack of knowledge about the number of active terminals. Hence, the extra cost is required to confirm that there are no active terminals. Since this extra cost is around $1.7$, and confirmation incurs a cost of at least $1$, the proposed reservation protocol performs close to the optimal genie-aided policy.

\section{Conclusions}
This paper investigates the random multiple access problem and proposes an efficient reservation protocol optimized through learning under the POMDP framework. We model the system dynamics under a general class of reservation protocols as a POMDP and exploit its structural properties to develop an optimized reservation protocol. To this end, we employ RTDP-Bel, which efficiently learns to optimize channel reservations while maintaining computational tractability. Additionally, we introduce a pre-training strategy that leverages results from a genie-aided problem to accelerate learning. The proposed reservation protocol also ensures FIFO service, guaranteeing that packets arriving within the current frame are always serviced before those in subsequent frames. Unlike conventional random access protocols, the structured transmission approach in the proposed reservation protocol enables a more tractable analysis of system performance, including stability, throughput, and delay. Finally, numerical results demonstrate the superiority of our reservation protocol over both classic and modern random access protocols, with performance gains primarily attributed to the shorter reservation process, which leads to higher effective throughput and improved efficiency, particularly under heavy traffic conditions.


\bibliographystyle{IEEEtran}
\bibliography{mybib}

\end{document}